\tikzset{decorate sep/.style 2 args={decorate,decoration={shape backgrounds,shape=circle,shape size=#1,shape sep=#2}}}
\newtheorem{prop}{Proposition}[section]
\crefname{hypothesis}{Hypothesis}{Hypotheses}
\title{Pricing Spread Option with Liquidity Adjustments\thanks{Preprint.
\funding{This work was funded by NSERC grant 5-36700.}}}
\author{
Kevin S. Zhang\thanks{Department of Mathematics and Statistics, McMaster University, Hamilton, ON, Canada.}
\and
Traian A. Pirvu\thanks{Department of Mathematics and Statistics, McMaster University, Hamilton, ON, Canada.}
}
\begin{document}

\maketitle

\begin{abstract}
We study the pricing and hedging of European spread options on
correlated assets when, in contrast to the standard framework and consistent with imperfect liquidity markets, the trading in the stock market has a direct impact on stocks prices. We consider a \textit{partial-impact} and a \textit{full-impact} model in which the price impact is caused by every trading strategy in the market. The generalized Black-Scholes pricing partial differential equations (PDEs) are obtained and analysed. We perform a numerical analysis to exhibit the illiquidity effect on the replication strategy of the European spread option. Compared to the Black-Scholes model or a partial impact model, the trader in the full impact model buys more stock to replicate the option, and this leads to a higher option price.
\end{abstract}

\begin{keywords}
  Spread Option, price impact, XVA, illiquid market, deep learning, Deep Galerkin Method, transfer learning
\end{keywords}

\begin{AMS}
91BE25.  91G20, 35K15, 65M06
\end{AMS}

\section{Introduction}
\par
Spread option gives holders the right but not the obligation to purchase the difference between two assets, at a cost. $\big(S_1(T)-S_2(T)-k\big)^+$ (here $k$ is the strike of the option, A.K.A the cost). While it can be written on all varieties of asset such as equities, bonds, and currencies, it has a unique role in the commodity market. The commodity market consists of many sectors such as agriculture, energy, and petroleum. In the agricultural market, \textit{Crush Spread} Johnson et al. (1991) \cite{Johnson} allows for the exchanges of unrefined soybeans with a combination of soybean oil and soybean meal. In the energy market, \textit{Spark Spread} Girma and Paulson (1998) \cite{Girma} pays the spread between natural gas and electricity. In the petro-market, \textit{Crack Spread} provides utility for the differential between the price of crude oil and petroleum products. Information on various Crack Spreads can be found in the NYMEX Rulebook (2020) \cite{NYMEX}. 
\par
Valuation of spread option involves solving a two-dimensional Black–Scholes PDE of the form: 
\begin{align}
\label{SpreadPDE}
    \left\{
      \begin{aligned}
        rV&=V_t+\frac{1}{2}\sigma_1^2s_1^2V_{s_1s_1}+\rho\sigma_1\sigma_2s_1s_2V_{s_1s_2}+\frac{1}{2}\sigma_2^2s_2^2V_{s_2s_2}  \cr
        &+rs_1V_{s_1}+rs_2V_{s_2},   \cr
        V(T,s_1,s_2)&=(s_1-s_2-k)^+,\quad\text{with $0<s_1,s_2<\infty$, $0\leq{t}\leq{T}$}.
      \end{aligned}\right.
\end{align}
The terminal condition of the above equation reflects the maturity payoff of the Spread option. One should be aware a closed form solution of \eqref{SpreadPDE} does not exist. Rather, we rely on numerical approaches for spread option pricing. In fact, there have been various numerical methods developed on approximating the pricing for these options. For example, \cite{Hurd,Carmona,Dempster} are well established methods.
\par
Underpinning classical pricing theory is the assumption of perfect market liquidity. Namely, that trading activity has no effect on asset prices. The relaxation of this assumption will impact the asset prices, and subsequently the value of derivative contract on the assets. This difference in price against the classical model was named \textit{liquidity valuation adjustment} (LVA) Pirvu and Zhang (2020) \cite{zhangshuai}. Willmot (2000) \cite{Wilmott} studied various price impact models arising from different trading strategies. In particular, Pirvu et al. \cite{Pirvu}, \cite{Pirvu2} investigated spread option impact models under \textit{Delta Hedging} strategies. The study yielded both a \textit{partial impact} model and \textit{full impact} model. The key difference between the two aforementioned model is within the replication process. The partial impact model uses the Delta of the impactless model while the full impact model use the Delta of the model with price impact. As of a result, the full impact model to have a non-linear pricing PDE.
\par
In this paper, we further explore the LVA market model of Pirvu and Zhang (2020) \cite{zhangshuai}. The model consist of two risky assets whose prices are driven by a pair of stochastic differential equations (SDEs). The illiquid asset price is modified to include full or partial price impact from trading. This give rise to two distinctive pricing PDEs corresponding to the partial impact and full impact model. Existence and uniqueness of the asset price SDEs are established. The approach used to derive the option price PDE is the replicating portfolio methodology. The option price PDEs are investigated in both partial impact and full impact models. We present a novel technique to numerically solve these PDEs. Motivated by \cite{Sirignano,alaradi}, we apply the Deep Galerkin Method (DGM) for PDEs. It relies on approximating the solution of a PDE with a deep neural network. The network is trained to satisfy the PDE's differential operator, its initial condition, as well as the boundary conditions. Their numerical routine is not affected by the PDE dimension because it is mesh free. We performed numerical experiments and analyse the results. We learn from our numerical experiments that the option hedging strategy in the full impact model has a higher financing cost and this result in a higher option price when compared to a partial impact or no impact model.
\par
The remainder of this paper is organized as follows. Section \ref{sec:model} presents the financial market model. The partial impact and full impact are treated in Subsections \ref{sec:Partial} and 
\ref{sec:Full}. The DGM method is presented in Section \ref{sec:Num} and numerical experiments in Section 5. The paper ends with a conclusion and an appendix sections.
\section{Model Framework}
\label{sec:model}
In this study, we adopt the basic mode framework of \cite{zhangshuai} for exchange options, and apply modifications to suit spread options. Our model of a market is based on a filtered probability space $(\Omega,\lbrace\mathscr{F}_t\rbrace_{t \in [0,T]},\mathbbm{P} )$ that satisfies the usual conditions, and consists of two assets. The asset prices are assumed to follow a $\lbrace\mathcal{F}_t\rbrace_{t\geq 0}$-adaptive two-dimensional It\^{o}-process $\mathbf{S}(t)=(S_1(t),S_2(t)).$ The randomness in this market is driven by a two-dimensional Brownian motion $\mathbf{W}(t)=(W_1(t),W_2(t))$. The assets prices dynamics are given by the following stochastic differential equations
    \begin{align}
    \label{Eq:FLMMOG}
    \begin{aligned}
        dS_1(t)&=\mu_1(t) S_1(t) dt+\sigma_1S_1(t)dW_1(t)+\lambda(t,S_1,S_2)df(t,S_1,S_2),\\
        dS_2(t)&=\mu_2(t) S_2(t) dt+\sigma_2 S_2(t)dW_2(t),
    \end{aligned}
    \end{align}
where $\lambda (t,S_1,S_2)$ and $f(t,S_1,S_2)$ are a price impact and the corresponding trading strategy respectively. Dynamics of the price impact function is given by 
    \begin{align}
    \label{impact}
    \bar{\lambda}\big(t,s_1,s_2\big)=
    \begin{cases}
    \epsilon\big(1-e^{-\beta(T-t)^{\frac{3}{2}}}\big)\quad&\text{if}\quad\underline{S}<s_1,s_2<\overline{S},
    \\
    0\quad&\text{otherwise},
    \end{cases}
    \end{align}
where $\underline{S}$ and $\overline{S}$ represents a trading floor and cap of the assets respectively. This cause the trading price impact to be truncated within the floor and cap. As for the other parameters, $\epsilon$ is the price impact per share, and $\beta$ is a decaying constant. 
It is important to emphasize that $\bar{\lambda}(t,s_1,s_2)$ is used for numerical approximation. The theoretical $\lambda(t,s_1,s_2)$ is a function with bounded derivative, and can be obtained through standard mollifying of $\bar{\lambda}(t,s_1,s_2)$.
\par
The trading strategy $f(t,s_1,s_2)$ can be a large trader's
strategy (partial impact), or any trading strategy (full impact). As in \cite{Pirvu2}, the SDEs of \eqref{Eq:FLMMOG} can be rewritten as:
    \begin{align}
    \label{FLMMSDE}
    \begin{aligned}
        dS_1(t)&=\bar{\mu}_1\big(\mathbf{S}(t)\big)dt+\bar{\sigma}_{11}\big(\mathbf{S}(t)\big)dW_1(t)+\bar{\sigma}_{12}\big(\mathbf{S}(t)\big)dW_2(t), \\
        dS_2(t)&=\bar{\mu}_2(t)dt+\bar{\sigma}_{21}(t)dW_1(t)+\bar{\sigma}_{22}(t)dW_2(t),
    \end{aligned}
    \end{align}
    where the drift and diffusion functions are:
    \begin{align*}
    \label{FLMMfunction}
    \bar{\mu}_1(t,s_1,s_2)&=\frac{1}{1-\lambda{}f_{s_1}}\Big(\mu_1s_1+\lambda{}f_{t}+s_2\mu_2\lambda{}f_{s_2}+\frac{f_{s_1s_2}(\rho\sigma_1\sigma_2s_1s_2+\sigma_2^2s_2^2\lambda{}f_{s_2})}{1-\lambda{}f_{s_1}}
    \\
    &+\frac{f_{s_1s_1}(\sigma_1^2s_1^2+\sigma^2_2s_2^2\lambda^2f_{s_2}^2+2\rho\sigma_1\sigma_2s_1s_2\lambda{}f_{s_2})}{2\big(1-\lambda{}f_{s_1}\big)^2}+\frac{\sigma_2^2s_2^2f_{s_2s_2}}{2}\Big),
    \\
    \bar{\mu}_2(t)&=\mu_2 s_2,
    \\
    \bar{\sigma}_{11}(t,s_1,s_2)&=\frac{\sigma_1s_1}{1-\lambda{}f_{s_1}},\qquad\qquad\bar{\sigma}_{12}(t,s_1,s_2)=\frac{\sigma_2s_2\lambda{}f_{s_2}}{1-\lambda{}f_{s_1}},
    \\
    \bar{\sigma}_{21}(t)&=\sigma_2 s_2 \rho,\qquad\qquad\qquad\qquad\bar{\sigma}_{22}(t)=\sigma_2s_2\sqrt{1-\rho^2}.
    \end{align*}
In Section \ref{sec:Partial} and Section \ref{sec:Full} we specialize the above SDEs to reflect partial and full impact. For the interest of pricing and hedging, we look for a risk-neutral pricing probability measure. The existence of such a probability measure is shown in Theorem \ref{thm:RNM}.
\begin{theorem}[\textbf{Finite Liquidity Risk-Neutral Measure}]
\label{thm:RNM}
There exists a unique risk-neutral measure $\widetilde{\mathbbm{P}}$ for {the finite liquidity market model (FLMM),} given by
\begin{align*}
    \widetilde{\mathbbm{P}}(\mathcal{A})=\int_{\mathcal{A}}Z(\omega)d\mathbbm{P}(\omega)\text{ for all $\mathcal{A}\in\mathcal{F}_T$},
    \end{align*}
where
    \begin{align*}
    Z(t)=\exp\big(-\int_0^t\langle\,\mathbf{\Theta}(u),d\mathbf{W}(t)\rangle-\frac{1}{2}\int_0^t||\mathbf{\Theta}(u)||^2du\big),
    \end{align*}
with the vector-valued market price of risk generator process
    \begin{gather*}
    \mathbf{\Theta}\big(t,S_1(t),S_2(t)\big)=\frac{1}{\bar{\sigma}_{11}\bar{\sigma}_{22}-\bar{\sigma}_{12}\bar{\sigma}_{21}}
    \begin{bmatrix} 
    \bar{\sigma}_{22}&-\bar{\sigma}_{12}\\
    -\bar{\sigma}_{21}&\bar{\sigma}_{11}
    \end{bmatrix}
    \begin{bmatrix} 
    \bar{\mu}_1-r\\
    \bar{\mu}_2-r
    \end{bmatrix}.
    \end{gather*}
    \end{theorem}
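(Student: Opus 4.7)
The approach is the standard Girsanov plus completeness route adapted to the $2$D FLMM. First I would set $\widetilde{W}_i(t):=W_i(t)+\int_0^t\Theta_i(u)\,du$ for $i=1,2$, so that once the Dol\'eans exponential $Z$ is shown to be a true martingale, Girsanov's theorem will promote $\widetilde{\mathbf{W}}$ to a $\widetilde{\mathbbm{P}}$-Brownian motion. Substituting into \eqref{FLMMSDE}, the $\widetilde{\mathbbm{P}}$-drift of $S_i$ becomes $\bar{\mu}_i-\bar{\sigma}_{i1}\Theta_1-\bar{\sigma}_{i2}\Theta_2$, and requiring $e^{-rt}S_i$ to be a local martingale under $\widetilde{\mathbbm{P}}$ forces the linear system
\[
\begin{bmatrix}\bar{\sigma}_{11}&\bar{\sigma}_{12}\\ \bar{\sigma}_{21}&\bar{\sigma}_{22}\end{bmatrix}\mathbf{\Theta}=\begin{bmatrix}\bar{\mu}_1-rS_1\\ \bar{\mu}_2-rS_2\end{bmatrix},
\]
whose solution by matrix inversion reproduces exactly the closed form for $\mathbf{\Theta}$ displayed in the statement.

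Second, I would establish invertibility of the diffusion matrix. Using the formulas listed immediately after \eqref{FLMMSDE} its determinant simplifies to $\sigma_1 s_1\sigma_2 s_2\sqrt{1-\rho^2}/(1-\lambda f_{s_1})$, which is nonzero on the trading region for $|\rho|<1$ and for admissible strategies $f$ satisfying $\lambda f_{s_1}<1$. This gives both existence and uniqueness of the market-price-of-risk vector $\mathbf{\Theta}$, and hence uniqueness of $\widetilde{\mathbbm{P}}$ itself once $Z$ is handled, since the Radon--Nikodym density in a complete two-asset/two-Brownian market is pinned down by $\mathbf{\Theta}$.

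Third, to upgrade $Z$ from a local martingale to a genuine martingale I would verify Novikov's condition $\mathbbm{E}\bigl[\exp\bigl(\tfrac{1}{2}\int_0^T\|\mathbf{\Theta}(u)\|^2\,du\bigr)\bigr]<\infty$. The remark following \eqref{impact} that the theoretical $\lambda$ has bounded derivatives, together with the implicit admissibility constraints on the replicating strategy $f$ (bounded derivatives through second order on the trading region and $1-\lambda f_{s_1}$ bounded away from zero), make the coefficients $\bar{\mu}_i,\bar{\sigma}_{ij}$ uniformly bounded on $[0,T]\times[\underline{S},\overline{S}]^2$, which renders $\|\mathbf{\Theta}\|$ bounded and Novikov immediate. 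The hard part I expect is precisely this last step: because $\bar{\mu}_1$ contains second derivatives of $f$ divided by $(1-\lambda f_{s_1})^2$, the bounds needed on the admissible class of $f$ are not completely trivial, and in a rigorous write-up I would either state these regularity requirements as an explicit standing hypothesis on $(f,\lambda)$, or use a localization/stopping argument to reduce to the bounded case and pass to the limit; the algebraic Girsanov steps themselves are otherwise routine.
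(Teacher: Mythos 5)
Your overall route (Girsanov change of drift, impose the risk-neutral drift to get a linear system for $\mathbf{\Theta}$, invert the diffusion matrix, then verify that $Z$ is a true martingale) is the same as the paper's, and it even goes beyond the paper, which never discusses Novikov's condition at all. However, your second step contains a concrete algebraic error that erases the one genuinely delicate point of the theorem. With the coefficients listed after \eqref{FLMMSDE},
\[
\bar{\sigma}_{11}\bar{\sigma}_{22}-\bar{\sigma}_{12}\bar{\sigma}_{21}
=\frac{\sigma_1 s_1}{1-\lambda f_{s_1}}\,\sigma_2 s_2\sqrt{1-\rho^2}
-\frac{\sigma_2 s_2\lambda f_{s_2}}{1-\lambda f_{s_1}}\,\rho\sigma_2 s_2
=\frac{\sigma_2 s_2\bigl(\sigma_1 s_1\sqrt{1-\rho^2}-\rho\,\sigma_2 s_2\lambda f_{s_2}\bigr)}{1-\lambda f_{s_1}},
\]
so the determinant is \emph{not} $\sigma_1 s_1\sigma_2 s_2\sqrt{1-\rho^2}/(1-\lambda f_{s_1})$: you dropped the cross term $\bar{\sigma}_{12}\bar{\sigma}_{21}$, which vanishes only if $\rho=0$ or $\lambda f_{s_2}=0$. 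Hence $|\rho|<1$ and $\lambda f_{s_1}<1$ do not by themselves give invertibility; the matrix degenerates exactly on the set where $S_1/S_2=\frac{\sigma_2\rho}{\sigma_1\sqrt{1-\rho^2}}\lambda f_{s_2}$, and excluding this set (the paper does so by noting the condition holds $\widetilde{\mathbb{P}}$-a.s.\ because the price processes have continuous distributions) is precisely the substance of the paper's proof. Your write-up needs this condition stated and justified.

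There are knock-on effects. Your Novikov argument bounds $\|\mathbf{\Theta}\|$ using the claimed uniform nondegeneracy; since the true determinant can come arbitrarily close to zero near the degeneracy surface, boundedness of $\bar{\mu}_i,\bar{\sigma}_{ij}$ alone does not bound $\mathbf{\Theta}$, so you would need either a standing hypothesis keeping $\sigma_1 s_1\sqrt{1-\rho^2}-\rho\,\sigma_2 s_2\lambda f_{s_2}$ away from zero or a localization argument, as you anticipate for $\bar{\mu}_1$. A smaller mismatch: you require $e^{-rt}S_i$ to be a local martingale and therefore write the right-hand side as $\bar{\mu}_i-rS_i$, whereas the theorem, the paper's proof, and its risk-neutral SDE \eqref{thm:SolGirsanov} use $\bar{\mu}_i-r$; this inconsistency is arguably inherited from the paper's own notation, but as written your system does not reproduce the displayed formula for $\mathbf{\Theta}$.
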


    \begin{proof}
Please see \ref{FLRNM} in the Appendix Section.
\end{proof}
\par
The following theorem states the existence of a unique solution for the SDE system driving the asset prices, based on standard result.
\begin{prop}[\textbf{Finite Liquidity SDE Existence}]
\par
Suppose the diffusion functions $\bar{\sigma}_{11}(t,s_1,s_2)$ and $\bar{\sigma}_{12}(t,s_1,s_2)$ of \eqref{FLMMSDE} are uniformly Lipschitz  continuous in $s_1,s_2\in\big(\mathbbm{R^+}\big)^2$. Then, the SDE system
\eqref{FLMMSDE} under $\widetilde{\mathbbm{P}}$ becomes
    \begin{align}
    \label{thm:SolGirsanov}
    \begin{aligned}
    &dS_1(t)=rdt+\bar{\sigma}_{11}\big(\mathbf{S}(t)\big)d\widetilde{W}_1(t)+\bar{\sigma}_{12}\big(\mathbf{S}(t)\big)d\widetilde{W}_2(t),
    \\
    &dS_2(t)=rdt+\bar{\sigma}_{21}(t)d\widetilde{W}_1(t)+\bar{\sigma}_{22}(t)d\widetilde{W}_2(t).
    \end{aligned}
    \end{align}
 Furthermore, it has a unique strong solution.
\ref{thm:RNM}.
\end{prop}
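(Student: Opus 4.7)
The plan is to derive the $\widetilde{\mathbbm{P}}$-dynamics by a direct Girsanov substitution based on Theorem \ref{thm:RNM}, and then to invoke a standard strong existence/uniqueness theorem for Lipschitz SDEs. First I would set $\widetilde{\mathbf{W}}(t):=\mathbf{W}(t)+\int_0^t\mathbf{\Theta}(u)\,du$, which Theorem \ref{thm:RNM} already guarantees is a two-dimensional $\widetilde{\mathbbm{P}}$-Brownian motion (the Novikov/uniform integrability check being subsumed into the proof of that theorem). Substituting $dW_i=d\widetilde{W}_i-\Theta_i\,dt$ into \eqref{FLMMSDE} produces drifts of the form $\bar{\mu}_i-(\bar{\sigma}_{i1}\Theta_1+\bar{\sigma}_{i2}\Theta_2)$ for $i=1,2$. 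By the very construction of $\mathbf{\Theta}$ as the inverse of the diffusion matrix applied to the excess-return vector in Theorem \ref{thm:RNM}, these drift corrections collapse exactly to $r$, yielding \eqref{thm:SolGirsanov}.

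For the strong existence and uniqueness claim, the plan is to appeal to the classical theorem for It\^{o} SDEs whose coefficients are uniformly Lipschitz in the state and satisfy at most linear growth (e.g.\ Theorem 5.2.9 of Karatzas--Shreve, or Theorem 5.2.1 of \O ksendal). The drift $r$ is constant, hence trivially Lipschitz with linear growth. The second-row diffusion coefficients $\bar{\sigma}_{21}(t,s_2)=\rho\sigma_2 s_2$ and $\bar{\sigma}_{22}(t,s_2)=\sigma_2 s_2\sqrt{1-\rho^2}$ are linear in $s_2$ and therefore automatically satisfy both conditions. The first-row coefficients $\bar{\sigma}_{11}$ and $\bar{\sigma}_{12}$ are uniformly Lipschitz by hypothesis, and linear growth is then inherited by combining the Lipschitz bound with the value of the coefficients at a fixed reference point $(s_1^0,s_2^0)$. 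The usual Picard iteration then produces the unique strong solution.

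I expect the main obstacle to be not the invocation of the existence theorem but the compatibility of the hypothesis with the explicit formulas for $\bar{\sigma}_{11}$ and $\bar{\sigma}_{12}$, which contain the factor $1/(1-\lambda f_{s_1})$. This factor is well-defined, smooth, and Lipschitz only if $\lambda f_{s_1}$ is kept uniformly bounded away from $1$. This is precisely the role of the mollification of $\bar{\lambda}$ mentioned after \eqref{impact}, together with the standing assumption that the trading strategy $f$ has bounded partial derivatives: both are needed to legitimize the Lipschitz hypothesis rather than derive it. Granting this regularity, the program above goes through verbatim and yields the unique strong solution to \eqref{thm:SolGirsanov} under $\widetilde{\mathbbm{P}}$.
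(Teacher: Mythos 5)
Your proposal is correct and matches what the paper does: the paper gives no separate written proof for this proposition, treating it as a direct consequence of Theorem \ref{thm:RNM} (the Girsanov substitution with $\mathbf{\Theta}$ chosen so the drifts collapse to the riskless rate) combined with the standard existence and uniqueness theorem for SDEs with uniformly Lipschitz, linearly growing coefficients. Your additional remark that the hypothesis on $\bar{\sigma}_{11},\bar{\sigma}_{12}$ ultimately rests on keeping $1-\lambda f_{s_1}$ bounded away from zero (via the mollified $\lambda$ and bounded derivatives of $f$) is consistent with how the paper handles this point later, in the regularity conditions of Theorem \ref{theorem:FLMMexistIII}.
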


\section{Market Specification}
The purpose of this section is to describe the full and partial impact model in a manner that is clear, concise while still maintaining some mathematical rigorousness. For each of the partial and full impact model, we not only proved the existence and uniqueness for the market SDEs, but also the PDEs for option pricing. We also explains the market conditions that admits full price impact.
\subsection{Partial Impact}
\label{sec:Partial}
\par
In the case of partial price impact, only the trading activities of big institutional traders will affect market prices. If we assume all these big players are delta hedgers with the delta from BS model, then the risk-neutral dynamics of FLMM SDE system becomes
    \begin{align}
    \label{FLMMSDERNPart}
    \begin{aligned}
    &dS_1(t)=rdt+\bar{\sigma}^*_{11}\big(\mathbf{S}(t)\big)d\widetilde{W}_1(t)+\bar{\sigma}^*_{12}\big(\mathbf{S}(t)\big)d\widetilde{W}_2(t),
    \\
    &dS_2(t)=rdt+\bar{\sigma}_{21}(t)d\widetilde{W}_1(t)+\bar{\sigma}_{22}(t)d\widetilde{W}_2(t),
    \end{aligned}
    \end{align}
and the diffusion function are:
    \begin{align*}
    \bar{\sigma}_{11}^*(t,s_1,s_2)=\frac{\sigma_1s_1}{1-\lambda V_{s_1s_1}^{(BS)}},\qquad\qquad\bar{\sigma}_{12}^*(t,s_1,s_2)=\frac{\sigma_2s_2\lambda V_{s_1s_2}^{(BS)}}{1-\lambda V_{s_1s_1}^{(BS)}},\nonumber
    \end{align*}
where $V_{s_1s_1}^{(BS)}$ and $V_{s_1s_2}^{(BS)}$ are second order Greeks of spread option, under the BS Model. 

As one can see the study of Greeks for spread option
becomes crucial in further analysing the model. There has been some extensive studies on this in Li and Deng (2008) \cite{Li}. We derive our Greeks analysis from a Fourier transform method developed by Hurd and Zhou (2010) \cite{Hurd}. More details on the Greeks are available in Appendix \ref{subsec:Greek}.
Based on this we establish the existence and uniqueness
of asset prices in the next Theorem.
\begin{theorem}[\textbf{Finite Liquidity Existence III}]
\label{theorem:FLMMexistIII}
\par
The SDE system \eqref{FLMMSDERNPart} of FLMM under partial impact has a unique solution.
\end{theorem}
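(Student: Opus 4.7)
The plan is to verify the hypothesis of Proposition~2.1 for the partial-impact system~\eqref{FLMMSDERNPart}, namely that the diffusion coefficients are uniformly Lipschitz continuous in $(s_1,s_2)\in(\mathbb{R}^+)^2$; strong existence and uniqueness then follow immediately. The drift coefficients are constant (both equal to $r$), and the coefficients $\bar{\sigma}_{21}(t)=\sigma_2 s_2 \rho$ and $\bar{\sigma}_{22}(t)=\sigma_2 s_2\sqrt{1-\rho^2}$ are linear in $s_2$; these satisfy the Lipschitz and linear-growth conditions trivially. The real work therefore concentrates on $\bar{\sigma}_{11}^{*}$ and $\bar{\sigma}_{12}^{*}$.

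First I would establish that the common denominator $1-\lambda V^{(BS)}_{s_1s_1}$ is uniformly bounded away from zero. The theoretical $\lambda$ is a bounded mollification of $\bar\lambda$ from~\eqref{impact}, uniformly controlled by $\epsilon$; choosing $\epsilon$ small enough relative to $\sup|V^{(BS)}_{s_1s_1}|$ gives $1-\lambda V^{(BS)}_{s_1s_1}\geq c>0$ throughout $[0,T]\times(\mathbb{R}^+)^2$. Next I would exploit that $\lambda$ is compactly supported (effectively on the rectangle $[\underline{S},\overline{S}]^2$) with bounded derivatives, so on this rectangle both the numerators $\sigma_1 s_1$, $\sigma_2 s_2\,\lambda V^{(BS)}_{s_1s_2}$ and the denominator are Lipschitz in $(s_1,s_2)$. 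Outside the rectangle $\lambda\equiv 0$ and the coefficients collapse to $\sigma_1 s_1$ and $0$, which are linear; the mollification guarantees a smooth patch across the boundary, so the quotients are Lipschitz everywhere.

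The principal obstacle is that the Greeks $V^{(BS)}_{s_1s_1}$ and $V^{(BS)}_{s_1s_2}$ of a spread option can blow up as $t\uparrow T$ because the payoff $(s_1-s_2-k)^{+}$ is not smooth on the diagonal $\{s_1-s_2=k\}$. I would handle this by invoking the time-dependent factor $1-e^{-\beta(T-t)^{3/2}}$ in $\bar\lambda$, which vanishes sufficiently fast at maturity to neutralize these singularities, combined with the Fourier-transform representations of the Greeks from Appendix~\ref{subsec:Greek} (following Hurd and Zhou) to derive uniform bounds on $\lambda V^{(BS)}_{s_1s_1}$ and $\lambda V^{(BS)}_{s_1s_2}$ together with their first spatial derivatives. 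Once these uniform Lipschitz bounds are in place, Proposition~2.1 (equivalently, the classical It\^{o}--Picard iteration) delivers the unique strong solution of~\eqref{FLMMSDERNPart}, completing the proof.
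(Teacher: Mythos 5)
Your proposal follows essentially the same route as the paper: the paper also reduces the claim to uniform Lipschitz continuity of $\bar{\sigma}^*_{11},\bar{\sigma}^*_{12}$ (via the regularity conditions (1)--(3) of Theorem 2.1 in Pirvu and Zhang (2020)), and verifies them using exactly your three ingredients --- the compact support of $\lambda$ between the trading floor and cap, the Hurd--Zhou Fourier representation of the spread-option Greeks to bound the relevant integrals, and the $(T-t)^{3/2}$ decay of $\lambda$ to offset the blow-up of $V^{(BS)}_{s_1s_1}$ as $t\to T$ so that $1-\lambda V^{(BS)}_{s_1s_1}$ stays bounded away from zero. The only cosmetic difference is that you argue the Lipschitz property directly rather than through the cited conditions, and your initial phrase ``choosing $\epsilon$ small relative to $\sup|V^{(BS)}_{s_1s_1}|$'' should be read with the time-decay correction you give afterwards, since that supremum alone is infinite near maturity.
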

\begin{proof}
Please refer to the Appendix Section \ref{subsec:FiniteExist3}.
\end{proof}
\par
The replicating portfolio argument is fundamental in the derivations of the BS type PDE characterizing the spread option price. In this scenario, the portfolio used for replication have two assets and one cash account. The full argument can be found in Pirvu and Yazdanian (2016) \cite{Pirvu}. The resulting PDE will be linear and parabolic.
\begin{align}
\label{FLMMPDEPart}
    \left\{
        \begin{aligned}
            rV&=V_t+\frac{V_{s_1s_1}}{2(1-\lambda{}V^{(BS)}_{s_1s_1})^2}\big(\sigma_1^2s_1^2+\sigma^2_2s_2^2\lambda^2(V^{(BS)}_{s_1s_2})^2  \cr
            &+2\rho\sigma_1\sigma_2s_1s_2\lambda{}V^{(BS)}_{s_1s_2}\big)+\frac{V_{s_1s_2}}{1-\lambda{}V^{(BS)}_{s_1s_1}}\big(\rho\sigma_1\sigma_2s_1s_2   \cr
            &+\sigma_2^2s_2^2\lambda V^{(BS)}_{s_1s_2}\big)+\frac{1}{2}V_{s_2s_2}\sigma_2^2s_2^2+rs_1V_{s_1}+rs_2V_{s_2} \cr
            V(T,s_1,s_2)&=h(s_1,s_2),\quad\text{with $0<s_1,s_2<\infty$, $0\leq{t}\leq{T}$}.
        \end{aligned}
    \right.
\end{align}
To show the above PDE emit a unique classical solution, one may refer to Chapter 4 of Friedman (1975) \cite{Friedman}. In fact, the PDE \eqref{FLMMPDEPart} yield a unique classical solution whenever $1-\lambda{}f_{s_1}$ satisfies condition $(3)$ of Theorem \ref{theorem:FLMMexistIII}.

\subsection{Full Impact}
\label{sec:Full}
\par
In the case of full price impact, the interaction of every market participant (big or small), has a direct impact on the price of market constituents. If we assume all these participants applies a delta hedge strategy, then option sensitivities will be directly incorporated into the asset prices. In our market model, the second order option sensitivities (Gamma) becomes apart of the diffusion function of the illiquid asset $S_1$. We showcase the risk-neutral dynamics of FLMM SDE system in \eqref{FLMMSDERNFULL}.
\begin{align}
    \label{FLMMSDERNFULL}
        \begin{aligned}
        &dS_1(t)=rdt+\bar{\sigma}^{**}_{11}\big(\mathbf{S}(t)\big)d\widetilde{W}_1(t)+\bar{\sigma}^{**}_{12}\big(\mathbf{S}(t)\big)d\widetilde{W}_2(t),
        \\
        &dS_2(t)=rdt+\bar{\sigma}_{21}(t)d\widetilde{W}_1(t)+\bar{\sigma}_{22}(t)d\widetilde{W}_2(t),
        \end{aligned}
    \end{align}
where the diffusion function are
    \begin{align*}
    \label{FLMMfunctionFull}
    \bar{\sigma}_{11}^{**}(t,s_1,s_2)=\frac{\sigma_1s_1}{1-\lambda V_{s_1s_1}},\qquad\qquad\bar{\sigma}_{12}^{**}(t,s_1,s_2)=\frac{\sigma_2s_2\lambda V_{s_1s_2}}{1-\lambda V_{s_1s_1}}.\nonumber
    \end{align*}

\par
We can apply the canonical two-asset replicating portfolio argument, the result is the following non-linear BS-like PDE:
\begin{align}
\label{FLMMPDEFull}
    \left\{
      \begin{aligned}
        rV&=V_t+\frac{V_{s_1s_1}}{2(1-\lambda{}V_{s_1s_1})^2}\big(\sigma_1^2s_1^2+\lambda^2V_{s_1s_2}^2\sigma^2_2s_2^2   \cr
        &+2\lambda{}V_{s_1s_2}\rho\sigma_1\sigma_2s_1s_2\big)+\frac{V_{s_1s_2}}{1-\lambda{}V_{s_1s_1}}\big(\rho\sigma_1\sigma_2s_1s_2+\lambda{}V_{s_1s_2}\sigma_2^2s_2^2\big) \cr
        &+\frac{1}{2}V_{s_2s_2}\sigma_2^2s_2^2+rs_1V_{s_1}+rs_2V_{s_2},   \cr
        V(T,s_1,s_2)&=h(s_1,s_2),\quad\text{with $0<s_1,s_2<\infty$, $0\leq{t}\leq{T}$}.
      \end{aligned}\right.
\end{align}
This non-linearity is a major contrast between the partial and full price impact model, it brings a challenge to the establishment of model existence and uniqueness.
\par
We manage to establish existence and uniqueness for the market model \eqref{FLMMSDERNFULL} by initially showing the PDE \eqref{FLMMPDEFull} has a certain class of smooth solutions. Then extending existence and uniqueness to the SDE system \eqref{FLMMSDERNFULL} in a similar manner as outlined in Theorem \ref{theorem:FLMMexistIII}. This procedure is 
\begin{theorem}[\textbf{Finite Liquidity Existence IV}]
\label{theorem:FLMMexistIV}
\par
The SDE system \eqref{FLMMSDERNFULL} of FLMM SDEs with full impact assumption has a strong solution under $\widetilde{\mathbbm{P}}$.
\end{theorem}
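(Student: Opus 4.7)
The plan is to break the circular coupling between the SDE system~\eqref{FLMMSDERNFULL} and the option value $V$. In the full-impact model the diffusion coefficients $\bar\sigma^{**}_{11},\bar\sigma^{**}_{12}$ depend on the Greeks $V_{s_1s_1},V_{s_1s_2}$ of a classical solution to the nonlinear PDE~\eqref{FLMMPDEFull}, while $V$ in turn is tied to the asset dynamics through the replicating-portfolio argument. Following the hint in the paragraph preceding the statement, I would first establish classical solvability of~\eqref{FLMMPDEFull} as a standalone quasilinear parabolic problem, so that $(t,s_1,s_2)\mapsto V_{s_1s_1}(t,s_1,s_2)$ becomes a known, deterministic Lipschitz function. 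Once that is done, \eqref{FLMMSDERNFULL} becomes a standard SDE with Lipschitz coefficients and the argument of \cref{theorem:FLMMexistIII} produces a unique strong solution.

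For the PDE step I would reverse time via $\tau=T-t$ to rewrite~\eqref{FLMMPDEFull} as a forward quasilinear parabolic equation and run a Schauder fixed-point iteration. Concretely, on a ball $\mathcal B$ in $C^{1+\alpha/2,\,2+\alpha}$ of radius $R$, define $\Phi(U)=V$ where $V$ solves the \emph{linear} parabolic equation obtained by freezing the Greeks inside the coefficients of~\eqref{FLMMPDEFull} at $U_{s_1s_1},U_{s_1s_2}$. Classical Schauder estimates (Friedman~\cite{Friedman}, Ch.~4) supply both a priori bounds and continuity of $\Phi$, and the Schauder fixed-point theorem then gives a $C^{1,2}$ solution of~\eqref{FLMMPDEFull}. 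Because the payoff $h(s_1,s_2)=(s_1-s_2-k)^+$ is only Lipschitz, I would first mollify the terminal datum, derive estimates that are independent of the mollification parameter, and pass to the limit to obtain an interior classical solution for the original datum.

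With a classical $V$ in hand, the diffusion coefficients of~\eqref{FLMMSDERNFULL} become the deterministic functions
\[
\bar\sigma^{**}_{11}(t,s_1,s_2)=\frac{\sigma_1 s_1}{1-\lambda V_{s_1s_1}(t,s_1,s_2)},\qquad \bar\sigma^{**}_{12}(t,s_1,s_2)=\frac{\sigma_2 s_2\,\lambda V_{s_1s_2}(t,s_1,s_2)}{1-\lambda V_{s_1s_1}(t,s_1,s_2)}.
\]
Lipschitz regularity of $V_{s_1s_1}$ and $V_{s_1s_2}$ (from the $C^{2+\alpha}$ estimate) combined with the uniform positivity of the denominator yields Lipschitz continuity of both coefficients in $(s_1,s_2)$; the explicit factors $s_1,s_2$ supply linear growth. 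The standard SDE existence result invoked in \cref{theorem:FLMMexistIII} then produces the desired strong solution of~\eqref{FLMMSDERNFULL} under $\widetilde{\mathbbm{P}}$.

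The main obstacle will be maintaining the nondegeneracy condition $1-\lambda V_{s_1s_1}\geq c_0>0$ uniformly along the fixed-point iteration, which is exactly what ensures the linearized equation is uniformly parabolic and what keeps the SDE coefficients bounded. Because $\lambda$ is uniformly bounded by $\epsilon$ and compactly supported in $(\underline S,\overline S)^2$, this reduces to a smallness condition on $\epsilon$ relative to the $C^2$-radius of the invariant ball $\mathcal B$. Once that structural estimate is secured, everything else is routine quasilinear parabolic theory combined with the SDE argument already used in the partial-impact case.
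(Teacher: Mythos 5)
Your two-step skeleton --- first obtain a classical solution of the nonlinear PDE \eqref{FLMMPDEFull} so that the Greeks entering $\bar\sigma^{**}_{11},\bar\sigma^{**}_{12}$ become deterministic functions of $(t,s_1,s_2)$, then rerun the SDE existence argument of Theorem \ref{theorem:FLMMexistIII} --- is exactly the paper's. Where you genuinely differ is the PDE step: the paper does not iterate with frozen coefficients but treats the price impact perturbatively, writing the equation as $F(V(\varepsilon),\varepsilon)=0$, observing that $F(V^{(BS)},0)=0$, computing the Gateaux derivative $F_V(V^{(BS)},0)$ to be the linear two-asset Black--Scholes operator, and invoking the Implicit Function Theorem in a space $\mathcal{X}\subset C^{1,4,4}(\Omega)$ chosen so that $\lambda V_{s_1s_1}$, $\lambda V_{s_1s_2}$, $\lambda V_{s_1s_1s_1}$, $\lambda V_{s_1s_1s_2}$ are bounded and the Lipschitz conditions (1)--(3) hold. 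Both routes are perturbative in the impact size (the IFT yields solvability only in a neighbourhood of $\varepsilon=0$; your invariant ball needs $\epsilon$ small), so neither is more general on that score; what the IFT route buys is that the solution is produced directly in a space already carrying the third-order bounds that the SDE step consumes.

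Two steps of your version would not go through as written. First, Lipschitz continuity of $\bar\sigma^{**}_{11}$ and $\bar\sigma^{**}_{12}$ requires (locally) Lipschitz $V_{s_1s_1}$ and $V_{s_1s_2}$, i.e.\ control of third derivatives such as $V_{s_1s_1s_1}$, $V_{s_1s_1s_2}$, $V_{s_1s_2s_2}$ --- this is precisely why the paper works in $C^{1,4,4}$ and why its conditions (1)--(2) involve these terms. A $C^{1+\alpha/2,\,2+\alpha}$ solution only gives $\alpha$-H\"older second derivatives, which is not Lipschitz; you would have to bootstrap the Schauder estimates (differentiate the equation) to at least $C^{3+\alpha}$ in space before the SDE argument applies. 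Second, ``estimates independent of the mollification parameter'' cannot hold for $V_{s_1s_1}$ itself up to $t=T$: the kink in $h$ forces the Gamma to blow up as $t\to T$ (the paper explicitly notes the divergence of $\Gamma_{11}$), so the uniform nondegeneracy $1-\lambda V_{s_1s_1}\ge c_0$ near maturity cannot follow from smallness of $\epsilon$ alone. The paper closes this with the $(T-t)^{3/2}$ decay built into $\lambda$ in \eqref{impact}, which dominates the Gamma blow-up; your fixed-point and limiting argument must exploit the same weighting --- i.e.\ bound $\lambda V_{s_1s_1}$ rather than $V_{s_1s_1}$ --- both to keep the frozen-coefficient problems uniformly parabolic near maturity and to pass to the limit in the mollified terminal data.
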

\begin{proof}
Please refer to the Appendix Section \ref{subsec:FiniteExist4}.
\end{proof}

\section{Deep Galerkin Method}
\label{sec:Num}
\par
The curse of dimensionality is a common issue when attempting to solve high dimensional PDEs (include some literature). In high dimensions, methods such as finite difference and element not only become costly, but are often unstable. Muti-asset option pricing PDEs are affected by the curse of dimensionality. The \textit{Deep Galerkin Method} (DGM), developed by Sirignano and Spiliopoulos (2018) \cite{Sirignano}, have the potential to address these issues. From a high level, DGM can be viewed as a deep learning approach to solve weak formulation problems on PDE, but without the need to construct a mesh.
\par
In this section, we describe the DGM approach to solving two-asset pricing PDEs. We also outlines the steps required to solve \eqref{SpreadPDE}, \eqref{FLMMPDEPart} and \eqref{FLMMPDEFull}. Then, these solutions are compared against various benchmarks and insights on price impacts are subsequently derived.
\subsection{Adaption of DGM in Option Pricing}
\par
Consider a two-asset option pricing PDE:
\begin{align}
\label{dgmpde}
    \left\{
        \begin{aligned}
            \mathcal{L}V(t,s_1,s_2)=0,\qquad&(t,s_1,s_2)\in[0,T]\times(\mathbbm{R}^+)^2,\quad(interior) \cr
            V(T,s_1,s_2)=h(s_1,s_2),\qquad&(s_1,s_2)\in(\mathbbm{R}^+)^2,\quad(terminal) \cr
            V(t,s_1,s_2)=g(t,s_1,s_2),\qquad&(t,s_1,s_2)\in[0,T]\times\mathcal{B}\quad(boundary). \cr
        \end{aligned}
    \right.    
\end{align}
For the Sobolev space $\mathcal{H}_0^1=\mathcal{H}_0^1\big([0,T]\times(\mathbbm{R}^+)^2\big)$, the equivalent weak formulation of \eqref{dgmpde} is:
\begin{align}
\label{weakdgm}
    \left\{
        \begin{aligned}
        \langle\mathcal{L}V,u\rangle = 0\qquad &\forall{u} \in \mathcal{H}_0^1, \cr
        \langle V-h,v\rangle = 0\qquad &\forall{v} \in \bar{\mathcal{H}}_0^1, \cr
        \langle V-g,w\rangle = 0 \qquad &\forall{w} \in \partial \mathcal{H}_0^1. \cr
        \end{aligned}
    \right. 
\end{align}
\par
The regular Galerkin Method would require the careful selection of a set of basis functions $(\phi_1,\phi_2,...,\phi_N)$, that characterizes a finite approximation space $E_N\subset\mathcal{H}_0^1$. A unique best approximation, $\hat{V}_N$ of $V$, can be determined by projecting the PDE onto $E_N$. By increasing the dimension of approximation space, projection theorem gives a unique best approximation $\hat{V}_i$ in each approximation space $E_i$. The result is a sequence of approximators $\big\{\hat{V}_i\big\}_{i=N,N+1,..}$, that converges to $V$ by the completeness of $\mathcal{H}_0^1$. The rigorous formulation of this method can be found in \cite{Hunter}.
\par
DGM deviates from the regular Galerkin Method by assuming a neural network $f(t,s_1,s_2;\mathbf{\theta}):\mathbbm{R}^3\to\mathbbm{R}$ has the potential to capture the behavior of $V$. The network is subsequently initialized and trained with information gathered from the domain of $V$. This requires the selection of a meaningful objective function. The suggested objective function in \cite{Sirignano} closely resembles the weak formulation of \eqref{weakdgm} with the $\mathcal{L}^2$ inner product. The choice of $\mathcal{L}^2$ norm can be supported by literature such as \cite{Ayati}. The construction of the objective function proceeds as follows. For unit vectors $u \in  \mathcal{H}_0^1, v \in \bar{\mathcal{H}}_0^1$, and $\partial \mathcal{H}_0^1$, we apply the Cauchy-Schwartz Inequality to the weak formulation equations in \eqref{weakdgm}. Next, we sum the resulting terms, thus producing the objective function defined by
\begin{align}
\label{eq:tloss}
    \mathcal{J} = \|\mathcal{L}V\|^2_{[0,T]\times(\mathbbm{R}^+)^2} + \| V-h\|^2_{(\mathbbm{R}^+)^2} + \| V-g \|^2_{[0,T]\times(\mathbbm{R}^+)^2}.
\end{align}
\par
During the implementation stage, distributions are selected to generate points in the domain. This gives rise to a $\mathcal{L}^2$ distributional norm $\|f(\mathbf{x})\|^2_{\mathcal{D},\phi}=\int_{\mathcal{D}}|f(\mathbf{x})|^2\phi(\mathbf{x})d\mathbf{x}$, where $\phi(\mathbf{x})$ is a probability density function on the domain. Thus, the choice of $\phi(\mathbf{x})$ will significantly impact the performance of this method. A suitable objective function can be reformulated for our option pricing problems as:
    \begin{align}
    \label{eq:loss}
        J(\mathbf{\theta})&=J_1(\mathbf{\theta})+J_2(\mathbf{\theta})+J_3(\mathbf{\theta}),
    \end{align}
where $J_i(\theta)$, for $i=1,2,3$, are defined as follows:
    \begin{gather}
    \label{eq:loss1}
        J_1(\mathbf{\theta})=\|\mathcal{L}f(t,s_1,s_2;\mathbf{\theta})\|^2_{[0,T]\times(\mathbbm{R}^+)^2,\phi_1}.
    \end{gather}
This objective function measures how well the network satisfies the pricing PDE's differential operator.
    \begin{gather}
    \label{eq:loss2}
        J_2(\mathbf{\theta})=\|f(T,s_1,s_2;\mathbf{\theta})-h(s_1,s_2)\|^2_{(\mathbbm{R}^+)^2,\phi_2}.
    \end{gather}
This objective function measures how closely the network resembles the payoff function at maturity.
    \begin{gather}
    \label{eq:loss3}
        J_3(\mathbf{\theta})=\|f(t,s_1,s_2;\mathbf{\theta})-g(t,s_1,s_2)\|^2_{[0,T]\times\mathcal{B},\phi_3}.
    \end{gather}
The last objective function characterizes boundary conditions, or artificially created boundaries from asymptotics. For European style option pricing PDEs, these boundaries exist when underlying prices reach 0. The asymptotic appears when a price cap is impose on the underlings.
\par
The training data are generated as a tuple $(x^{(i)},x^{(T)},x^{(b)})$, where $x^{(i)}=(t,s_1,s_2)\sim\phi_1$, $x^{(T)}=(T,s_1,s_2)\sim\phi_2$ and $x^{(b)}=(t,s_1^{(b)},s_2^{(b)})\sim\phi_3$. In particular, $x^{(i)}$, $x^{(T)}$ and $x^{(b)}$ are generated from the interior, terminal and boundary (or artificial boundary) of the PDE respectively. The generated data are used to compute the objective function \eqref{eq:loss}.
\par
In the next phase, we apply a gradient descent algorithm, in hope of eventually finding a set of parameters $\theta$ for $f(t,s_1,s_2;\mathbf{\theta})$ that will produce a minima for the objective function. In fact, Correia et al. (2019) \cite{alaradi} mentioned DGM is strictly an optimization problem. Validation is unnecessary because the objective function directly characterizes weak formulation of the PDE. This also means a network that produce zero-valued objective function is the analytical solution of the PDE.
\par
The network architecture adopted in \cite{Sirignano} contains 1 dense layer and 3 DGM layers, all embedded with $tanh$ activation function. We modify the structure and incorporate the $swish$ activation function \cite{ramach}. Detailed arguments on the effectiveness of using $swish$ may be found in \cite{chen}. A summary of different types of activation function used in our network architecture are included in Table \ref{tab:activation}.  

\begin{table}[H]
    \caption{Activation Functions}
    \centering
    \begin{tabular}{|c|c|} 
    \hline
    Sigmoid & $\sigma(x)=\frac{1}{1+e^{-x}}$ \\
    \hline
    Tanh & $\sigma(x)=\frac{e^{x}-e^{-x}}{e^{x}+e^{-x}}$ \\
    \hline
    Swish & $\sigma(x)=\frac{x}{1+e^{-x}}$ \\
    \hline
    \end{tabular}
    \label{tab:activation}
    \end{table}

Figure \ref{Graph:DGMNetwork} captures the DGM network structure.
    \begin{figure}[H]
    \caption{DGM Network Architecture}
    \label{Graph:DGMNetwork}
    \centering
    \begin{tikzpicture}
        [innode/.style={circle, draw=red, fill=red!20, very thick, minimum size=2mm},
        outnode/.style={circle, draw=green, fill=green!20, very thick, minimum size=2mm},
        textnode/.style={rectangle, draw=white, fill=white!20, very thick, minimum size=5mm}]
        \draw[-,very thick,line width=0.75mm] (-1.75, 5) -- (5.75, 5);
        \draw[->,very thick,line width=0.75mm] (-1.75, 5) -- (-1.75, 4);
        \draw[->,very thick,line width=0.75mm] (0.75, 5) -- (0.75, 4);
        \draw[->,very thick,line width=0.75mm] (3.25, 5) -- (3.25, 4);
        \draw[->,very thick,line width=0.75mm] (5.75, 5) -- (5.75, 4);
        \draw[->,very thick,line width=0.75mm] (-1, 2) -- (0, 2);
        \draw[->,very thick,line width=0.75mm] (1.5, 2) -- (2.5, 2);
        \draw[->,very thick,line width=0.75mm] (4, 2) -- (5, 2);
        \draw[->,very thick,line width=0.75mm] (6.5, 2) -- (7.5, 2);
        \node[innode] at (-1.75, 5) {$X$};
        \draw[yellow,rounded corners=10,fill=yellow!20] (-2.5,0) rectangle (-1,4) node[black, pos=.5,rotate=90] {Swish Dense Layer};
        \draw[blue,rounded corners=10,fill=blue!20] (0,0) rectangle (1.5,4) node[black, pos=.5,rotate=90] {DGM Layer};
        \draw[blue,rounded corners=10,fill=blue!20] (2.5,0) rectangle (4,4) node[black, pos=.5,rotate=90] {DGM Layer};
        \draw[blue,rounded corners=10,fill=blue!20] (5,0) rectangle (6.5,4) node[black, pos=.5,rotate=90] {DGM Layer};
        \node[outnode] at (7.9, 2) {$Y$};
        \node[textnode] at (7.9, 1.3) {Linear Output};
    \end{tikzpicture}
    \end{figure}
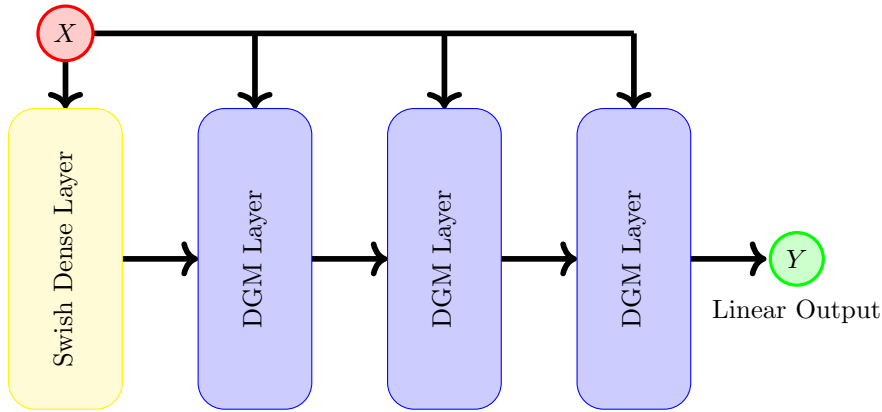
Our modified DGM layer is inspired by \textit{Gated Recurrent Unit} by Chung et al. \cite{Chung} (2014).  Figure \ref{Graph:MDGMLayer} captures the structure of each modified DGM layer.
    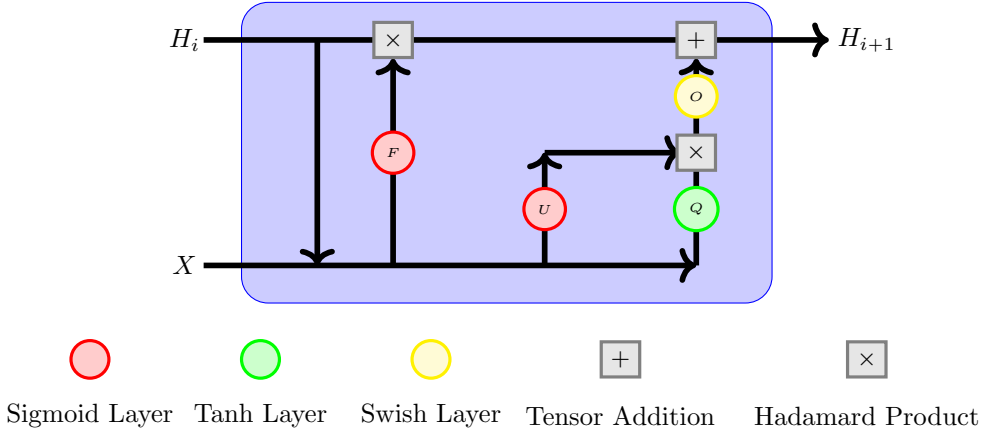
\begin{figure}[H]
    \caption{Modified DGM Layer}
    \label{Graph:MDGMLayer}
    \centering
    \begin{tikzpicture}
        [signode/.style={circle, draw=red, fill=red!20, very thick, minimum size=5mm},
        swishnode/.style={circle, draw=yellow, fill=yellow!20, very thick, minimum size=5mm},
        tannode/.style={circle, draw=green, fill=green!20, very thick, minimum size=5mm},
        opnode/.style={rectangle, draw=gray, fill=gray!20, very thick, minimum size=1mm},
        textnode/.style={rectangle, draw=white, fill=white!20, very thick, minimum size=5mm},]
        \node[textnode] at (-4.75, 0) {$H_i$};
        \node[textnode] at (4.25, 0) {$H_{i+1}$};
        \node[textnode] at (-4.75, -3) {$X$};
        \draw[blue,rounded corners=10,fill=blue!20] (-4,-3.5) rectangle (3,0.5) {};
        \draw[->,very thick,line width=0.75mm] (-4.5, 0) -- (3.75, 0);
        \draw[->,very thick,line width=0.75mm] (-3, 0) -- (-3, -3);
        \draw[->,very thick,line width=0.75mm] (-4.5, -3) -- (2, -3);
        \draw[->,very thick,line width=0.75mm] (-2, -3) -- (-2, -0.25);
        \draw[->,very thick,line width=0.75mm] (0, -3) -- (0, -1.5);
        \draw[->,very thick,line width=0.75mm] (0, -1.5) -- (1.8, -1.5);
        \draw[->,very thick,line width=0.75mm] (2, -3) -- (2, -0.25);
        \node[signode] at (-2, -1.5) {\tiny{$F$}};
        \node[signode] at (0, -2.25) {\tiny{$U$}};
        \node[tannode] at (2, -2.25) {\tiny{$Q$}};
        \node[swishnode] at (2, -0.75) {\tiny{$O$}};
        \node[opnode] at (2, 0) {$+$};
        \node[opnode] at (2, -1.5) {$\times$};
        \node[opnode] at (-2, 0) {$\times$};

        \node[signode] at (-6, -4.25) { };
        \node[textnode] at (-6, -5) {Sigmoid Layer};
        \node[tannode] at (-3.75, -4.25) {$ $};
        \node[textnode] at (-3.75, -5) {Tanh Layer};
        \node[swishnode] at (-1.5, -4.25) {$ $};
        \node[textnode] at (-1.5, -5) {Swish Layer};
        \node[opnode] at (1, -4.25) {$+$};
        \node[textnode] at (1, -5) {Tensor Addition};
        \node[opnode] at (4.25, -4.25) {$\times$};
        \node[textnode] at (4.25, -5) {Hadamard Product};
    \end{tikzpicture}
    \end{figure}
    
The mathematical operation behind our entire DGM network can be represented by the following set of equations:
\begin{align*}
    H_1 &= Swish\big(W_{0}X+b_0\big),
    \\
    F_l &= Sigmoid\big(W_{fx,l}X+W_{fh,l}H_l+b_{f,l}\big),\quad\text{for $l=1,2,3$},
    \\
    U_l &= Sigmoid\big(W_{ux,l}X+W_{uh,l}H_l+b_{u,l}\big),
    \\
    Q_l &= Tanh\big(W_{qx,l}X+W_{qh,l}H_l+b_{q,l}\big),
    \\
    O_l &= Swish\big(W_{ox,l}(U_l \circ O_l)+b_{o,l}\big).
    \\
    H_{l+1} &= F_l \circ H_l + O_l,
    \\
    Y &= H_{4} W_y + b_y,
\end{align*}
where $W$ are the weights, $b$ are the biases and $\circ$ is the Hadamard product.
\par
For iteration size $I$ and batch size $B$, we provide a general overview of the implementation of DGM in Algorithm \ref{algo:DGM}.
\begin{algorithm}[H]
\caption{Deep Galerkin Method for Option Pricing}
\label{algo:DGM}
\begin{algorithmic}
\STATE{Initialize learning rate $\alpha$ and network parameters $\mathbf{\theta}$}
\FOR{$i=1$ \textbf{to} $I$}
\STATE{Generate interior sample point $\mathbf{x}^{(i)}_{i}=[x^{(i)}_{i1},x^{(i)}_{i1},...,x^{(i)}_{iB}]$ from $\phi_1$}
\STATE{Generate terminal sample point $\mathbf{x}^{(T)}_{i}=[x^{(T)}_{i1},x^{(T)}_{i1},...,x^{(T)}_{iB}]$ from $\phi_2$}
\STATE{Generate boundary sample point $\mathbf{x}^{(b)}_{i}=[x^{(b)}_{i1},x^{(b)}_{i1},...,x^{(b)}_{iB}]$ from $\phi_3$}
\STATE{Compute the loss function:}
\STATE{\quad$J(\mathbf{\theta})=\|\mathcal{L}f(\mathbf{x}^{(i)}_{i};\mathbf{\theta})\|^2+\|f(\mathbf{x}^{(T)}_{i};\mathbf{\theta})-h(\mathbf{x}^{(T)}_{i})\|^2+\|f(\mathbf{x}^{(b)}_{i};\mathbf{\theta})-g(\mathbf{x}^{(b)}_{i})\|^2$}
\STATE{Take a descent step:}
\STATE{\quad$\mathbf{\theta}^{(new)}=\mathbf{\theta}^{(old)}-\alpha\frac{\partial J(\theta)}{\partial \mathbf{\theta}^{(old)}}$}
\STATE{Apply decay to the learning rate $\alpha$}
\ENDFOR
\end{algorithmic}
\end{algorithm}

\section{Experiments}
We run several experiments here to learn the partial \eqref{FLMMPDEPart} and full impact \eqref{FLMMPDEFull} option pricing PDEs. The adaptation of transfer learning is justified due to the similarity of these PDEs.

\subsection{Experiment Methodologies}
\label{subsect:ExpSetting}
\par
If an undergraduate student is given the task of learning graduate material. It is unlikely the student will perform very well. However, if that same student were to learn the prerequisites knowledge beforehand, and reattempt. That student certainly stand a better chance. In machine learning, this concept is often referred to as \textit{transfer learning}. It is the method of applying prior knowledge to related problems but often difficult to solve directly. Bengio (2012) \cite{Bengio} goes into extensive detail on transfer learning. Weiss et al. (2016) \cite{weiss} provides a formal definition for this method in terms of a domain $\mathcal{D}=\big\{\mathcal{X},\phi_{\mathcal{X}}\big\}$ and learning task $\mathcal{T}=\big\{\mathcal{Y},f(\cdot)\big\}$ ($\mathcal{X}$-feature space, $\phi_{\mathcal{X}}$-feature distribution, $\mathcal{Y}$-label space, $f(\cdot)$-predictive function).

\begin{definition}{(\textbf{Transfer Learning})}
\label{def:transfer}
\par
For a pair of domain and learning task $\mathcal{D}_s=\big\{\mathcal{X}_s,\phi_{\mathcal{X}_s}\big\}$, $\mathcal{T}_s=\big\{\mathcal{Y}_s,f_s(\cdot)\big\}$. Consider a target domain and learning task $\mathcal{D}_t=\big\{\mathcal{X}_t,\phi_{\mathcal{X}_t}\big\}$, $\mathcal{T}_t=\big\{\mathcal{Y}_t,f_t(\cdot)\big\}$. Transfer learning is the process of using relevant information of $f_s(\cdot)$ to improve the predictive capability of $f_t(\cdot)$.
\end{definition}
\par
By adopting transfer learning, we may train DGM nets to learn the FLMM pricing PDEs \eqref{FLMMPDEPart} and \eqref{FLMMSDERNFULL}. The aforementioned PDEs are special cases of the 2-dimensional BS PDE for Spread Option \eqref{SpreadPDE}. Therefore, we should train an initial DGM net to learn the relatively simpler BS PDE \eqref{SpreadPDE}. Subsequently, we may modify the objective function \eqref{eq:loss} in accordance to the more complicated PDE with price impacts, then further train the network to learn \eqref{FLMMPDEPart} and \eqref{FLMMSDERNFULL}.
\par
For some asset price cap $C$, we restrict the domain to the finite cube $[0,T]\times[0,C]^2$. This will allow us impose asymptotics as boundary conditions (see Section \ref{app:SpreadLoss} for more details), and in turn get a faster convergence. During implementation, we use mean squared error (MSE) as an estimator for the $\mathcal{L}^2$ norms in \eqref{eq:loss}. In calculation of MSE, $N$ is the mini-batch size for training, it should be large to ensure accuracy of the estimator.
\par
To implement DGM for the PDEs \eqref{SpreadPDE}, \eqref{FLMMPDEPart} and \eqref{FLMMPDEFull}, we can follow Algorithm \ref{algo:DGM} and define a distinct objective function for each of the PDEs. It is apparent the functions $J_2(\theta)$ and $J_3(\theta)$ are shared amongst these objective functions.
\begin{align*}
\label{SpreadLoss}
\hat{J}^{(b)}(\theta) &= \hat{J}^{(b)}_1(\theta) + \hat{J}_2(\theta) + \hat{J}_3(\theta),\qquad\text{(BS Model)}
\\
\hat{J}^{(p)}(\theta) &= \hat{J}^{(p)}_1(\theta) + \hat{J}_2(\theta) + \hat{J}_3(\theta),\qquad\text{(FLMM with partial impact)}
\\
\hat{J}^{(f)}(\theta) &= \hat{J}^{(f)}_1(\theta) + \hat{J}_2(\theta) + \hat{J}_3(\theta).\qquad\text{(FLMM with full impact)}
\end{align*}
Details on the objective functions $\hat{J}^{(b)}_1(\theta)$, $\hat{J}^{(p)}_1(\theta)$, $\hat{J}^{(f)}_1(\theta)$, $\hat{J}_2(\theta)$ and $\hat{J}_3(\theta)$ can be found in Section \ref{app:SpreadLoss}.
\par
The sampling method is completely problem depended, user should focus sampling from a sub-domain of highly probable option input parameters. In the case of spread option, we noticed the objective function convergence faster when we choose sampling distribution that produce more non-zero option value. We present details on our sampling distributions for each scenario in Table \ref{Tab:sample}.
    \begin{table}[H]
    \caption{Sampling Method}
    \centering
    \label{Tab:sample}
    \begin{tabular}{|c|c|c|} 
    \hline
     & \textbf{Support} & \textbf{Sampling Distribution} \\
    \hline
    $\hat{\phi}_{1}$ & $(t,s_1,s_2)\in[0,T]\times[0,C]^2$ & $t\sim\mathcal{U}(0,T)$, $s_1\sim C\beta(3,10)$, $s_2\sim C\beta(2,10)$  \\
    \hline
    $\hat{\phi}_{2}$ & $(s_1,s_2)\in[0,C]^2$ & $s_1\sim C\beta(3,10)$, $s_2\sim C\beta(2,10)$  \\
    \hline
    $\hat{\phi}_{31}$ & $(t,s_1)\in[0,T]\times[0,C]$ & $t\sim\mathcal{U}(0,T)$, $s_1\sim C\beta(3,10)$ \\
    $\hat{\phi}_{32}$ & $(t,s_2)\in[0,T]\times[0,C]$ & $t\sim\mathcal{U}(0,T)$, $s_2\sim C\beta(2,10)$ \\
    \hline
    \end{tabular}
    \label{tab:DGMSample}
    \end{table}

We present histograms for our sampling method in Figure \ref{Fig:Sample}. One may notice we sample $t$ uniformly, this because we desire to option prices evenly across a span of time to maturities. For the assets, we adopted two beta distribution, one slightly more centered than the other. The reason is because we desire the option value to be non-zero, then with high probability, the first asset should have a greater price than the second.

    \begin{figure}[H]
    \caption{Sampling Method}
    \label{Fig:Sample}
    \centering
    \begin{subfigure}[b]{0.475\textwidth}
    \includegraphics[height=4.1cm]{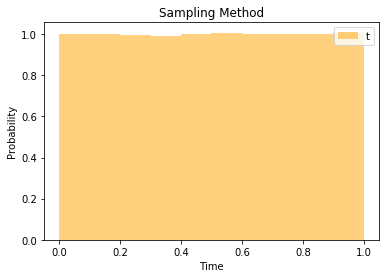}
    \end{subfigure}\quad
    \begin{subfigure}[b]{0.475\textwidth}
    \includegraphics[height=4.1cm]{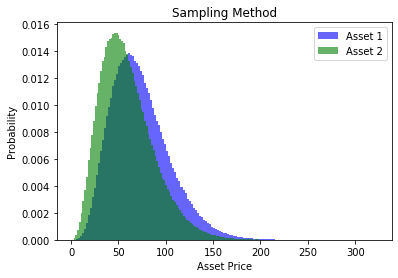}
    \end{subfigure}\quad
    \end{figure}

\subsection{Experiment Results}
\par
For all 3 PDEs (\eqref{SpreadPDE}, \eqref{FLMMPDEPart} and \eqref{FLMMPDEFull}), the shared option parameters we used are: $k=4$, $r=0.05$, $\rho=0.5$, $\sigma_1=0.4$, $\sigma_2=0.2$ and $T=1$. We are interested in the input region $(s_1,s_2)\in[0,100]^2$, the asset price cap is set to $C=600$. After training the first DGM network for \eqref{SpreadPDE}, we illustrate the results in Figure \ref{fig:DGMBSM}.

\begin{figure}[H]
\centering
\caption{Spread Option (BS Model)}
\label{fig:DGMBSM}
\begin{subfigure}[b]{0.475\textwidth}
\includegraphics[height=4.1cm]{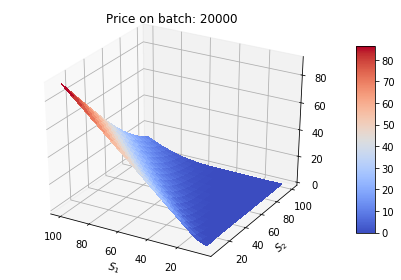}
\end{subfigure}\quad
\begin{subfigure}[b]{0.475\textwidth}
\includegraphics[height=4.1cm]{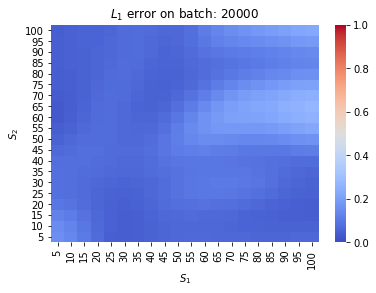}
\end{subfigure}\\
\begin{subfigure}[b]{0.475\textwidth}
\includegraphics[height=4.1cm]{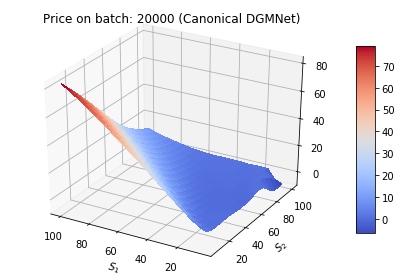}
\end{subfigure}\quad
\begin{subfigure}[b]{0.475\textwidth}
\includegraphics[height=4.1cm]{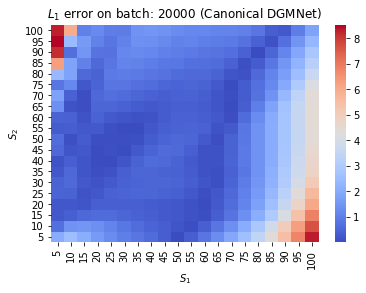}
\end{subfigure}
\footnotesize{*Benchmarked against FFT with grid size N=512.}
\end{figure}

The trained net matches extremely well with the FFT results. For this particular PDE, we our modified DGM net out performs the canonical DGM net. Although the trained net performs well on $[0,100]^2$, we should not expect the same level of performance will extend to $(\mathbbm{R}^+)^2$.
\par
Next, we take the previously trained model and apply transfer learning by switching the loss function to $\mathcal{L}^{(p)}(\theta)$ of \eqref{SpreadLoss}. The results are illustrated in Figure \ref{fig:Partial}.

\begin{figure}[H]
\centering
\caption{Spread Option (Partial Impact FLMM)}
\label{fig:Partial}
\begin{subfigure}[b]{0.475\textwidth}
\includegraphics[height=4.1cm]{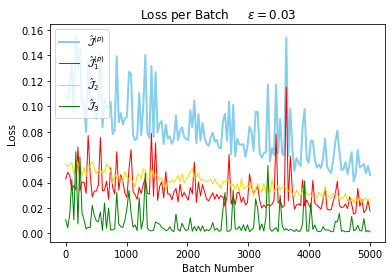}
\end{subfigure}\quad
\begin{subfigure}[b]{0.475\textwidth}
\includegraphics[height=4.1cm]{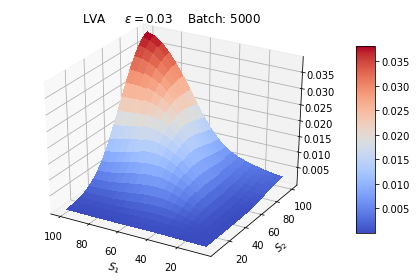}
\end{subfigure}\\
\begin{subfigure}[b]{0.475\textwidth}
\includegraphics[height=4.1cm]{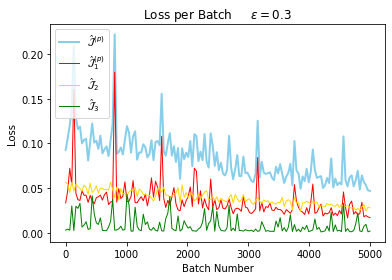}
\end{subfigure}\quad
\begin{subfigure}[b]{0.475\textwidth}
\includegraphics[height=4.1cm]{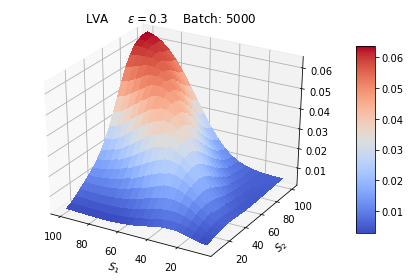}
\end{subfigure}
\footnotesize{*Benchmarked against FFT with grid size N=512.}
\end{figure}

From this experiment, we observe a price premium for the partial impact model. The premium is the result of illiquidity, it is the greatest for at-the-money options with high underlying asset prices. Furthermore, the liquidity premium increases as the cost-per-share parameter $\epsilon$ increases.
\par
For the full impact model, we fetch the pre-trained network for regular BS model and switch the loss function to $\mathcal{L}^{(f)}(\theta)$ of \eqref{SpreadLoss}. The results are illustrated in Figure \ref{fig:Full}.

\begin{figure}[H]
\centering
\caption{Spread Option (Full Impact FLMM)}
\label{fig:Full}
\begin{subfigure}[b]{0.475\textwidth}
\includegraphics[height=4.1cm]{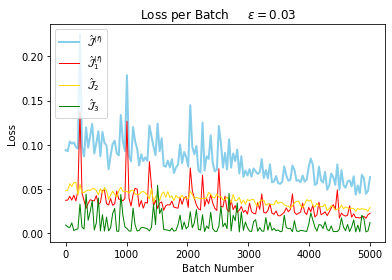}
\end{subfigure}\quad
\begin{subfigure}[b]{0.475\textwidth}
\includegraphics[height=4.1cm]{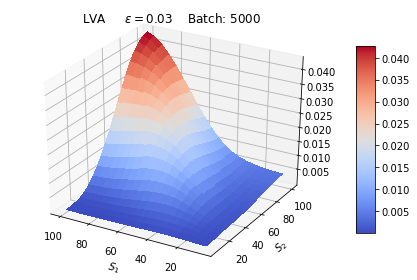}
\end{subfigure}\\
\begin{subfigure}[b]{0.475\textwidth}
\includegraphics[height=4.1cm]{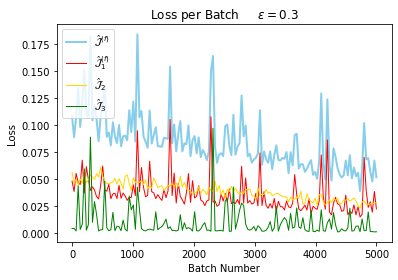}
\end{subfigure}\quad
\begin{subfigure}[b]{0.475\textwidth}
\includegraphics[height=4.1cm]{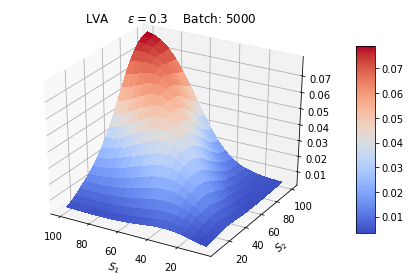}
\end{subfigure}
\footnotesize{*Benchmarked against FFT with grid size N=512.}
\end{figure}

From Figure \ref{fig:Full}, we observe a price premium for the full impact model. Once again, this price premium is the greatest for at-the-money options with high underlying asset prices. The liquidity premium also increases as the cost-per-share parameter $\epsilon$ increases.

\begin{figure}[H]
\centering
\caption{Spread Option (Partial vs Full)}
\label{fig:Diff}
\begin{subfigure}[b]{0.475\textwidth}
\includegraphics[height=4.1cm]{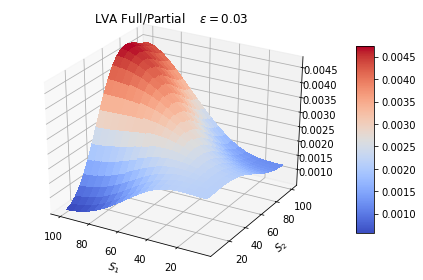}
\end{subfigure}\quad
\begin{subfigure}[b]{0.475\textwidth}
\includegraphics[height=4.1cm]{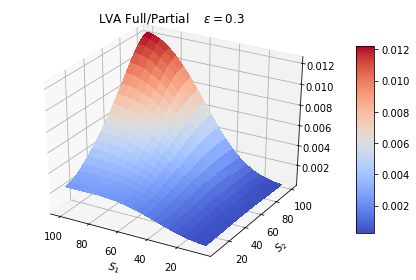}
\end{subfigure}
\end{figure}
\par
It's evident from Figure \ref{fig:Diff} the full impact model carries a greater liquidity premium than the partial impact mode. This is not a surprise because under full impact assumptions, all market trading activity has an impact on asset prices.

\section{Conclusions}
\label{sec:conclusions}
\par
We extended the FLMM of Pirvu and Zhang (2020) \cite{zhangshuai} to price spread options. We established the existence and uniqueness for the full and partial impact model SDEs driving the underlying asset prices and for the PDEs characterizing the spread option prices. We developed a variation of DGM method that essentially is a long short-term memory (LSTM) network with swish activation function, to numerically solve the option pricing PDEs. 
\par
Our DGM network has the ability to learn the PDEs \eqref{FLMMPDEFull} and \eqref{FLMMPDEPart}. The learning speed can be improved by learning the PDE of the impactless spread option \eqref{SpreadPDE} initially, then apply transfer learning. Our results indicate the full impact model requires greater liquidity value adjustment than the partial impact model. This finding is consistent because the full model takes includes all market trading activities. This paper may be useful for commodity traders who deal with illiquid underlying.
\section{Appendix}
\label{appendix}
\par
This section will include some of the formulas and proofs left out from the main body.

\subsection{Finite Liquidity Risk Neutral Measure}
\label{FLRNM}
\begin{proof}
\par Suppose there exists an equivalent measure $\widetilde{\mathbb{P}}$ generated by some process $\mathbf{\Theta}(t)$ such that under $\widetilde{\mathbb{P}}$, $dS_1(t)$ and $dS_2(t)$ has 
\begin{gather*}
Z(t)=\exp\big(-\int_0^t\langle\,\mathbf{\Theta}(u),d\mathbf{W}(t)\rangle-\frac{1}{2}\int_0^t||\mathbf{\Theta}(u)||^2du\big).
\end{gather*}
Then under $\widetilde{\mathbb{P}}$
\begin{align*}
d\widetilde{W}_1(t)&=dW_1(t)+\Theta_1(t)dt,\quad\text{and}
\\
d\widetilde{W}_2(t)&=dW_2(t)+\Theta_2(t)dt.
\end{align*}
Under $\widetilde{\mathbb{P}}$ we have the following dynamics:
\begin{align*}
dS_1(t)&=\Big(\bar{\mu}_1\big(\mathbf{S}(t)\big)-\bar{\sigma}_{11}\big(\mathbf{S}(t)\big)\Theta_1(t)-\bar{\sigma}_{12}\big(\mathbf{S}(t)\big)\Theta_2(t)\Big)dt\nonumber
+\bar{\sigma}_{11}\big(\mathbf{S}(t)\big)d\widetilde{W}_1(t)
\\
&+\bar{\sigma}_{12}\big(\mathbf{S}(t)\big)d\widetilde{W}_2(t),
\\
dS_2(t)&=\Big(\bar{\mu}_2(t)-\bar{\sigma}_{21}(t)\Theta_1(t)-\bar{\sigma}_{22}(t)\Theta_2(t)\Big)dt+\bar{\sigma}_{21}d\widetilde{W}_1(t)\nonumber
+\bar{\sigma}_{22}d\widetilde{W}_2(t).\nonumber
\end{align*}
Imposing the risk-less return rate under $\widetilde{\mathbb{P}}$ leads to the following linear system:
\begin{gather*}
\begin{bmatrix} 
\bar{\sigma}_{11}&\bar{\sigma}_{12}\\
\bar{\sigma}_{21}&\bar{\sigma}_{22}
\end{bmatrix}
\begin{bmatrix} 
\Theta_1(t)\\
\Theta_1(t)
\end{bmatrix}
=
\begin{bmatrix} 
\bar{\mu}_1-r\\
\bar{\mu}_2-r
\end{bmatrix}.
\end{gather*}
The system has a unique solution $\widetilde{\mathbb{P}}$ almost surely when the determinant is not zero, that is
$\bar{\sigma}_{11}(t)\bar{\sigma}_{22}(t)-\bar{\sigma}_{12}(t)\bar{\sigma}_{21}(t)\neq0,$ $\widetilde{\mathbb{P}}\otimes{d{t}}$ almost surely. Due to continuity of our processes, it will be sufficient to conclude for all $t$ we have $\bar{\sigma}_{11}(t)\bar{\sigma}_{22}(t)-\bar{\sigma}_{12}(t)\bar{\sigma}_{21}(t)\neq0,$ $\widetilde{\mathbb{P}}$ almost surely. Therefore, a necessary condition for the finite liquid market model to be complete is:
\begin{gather*}
\frac{S_1(t)}{S_2(t)}\neq\frac{\sigma_2\rho}{\sigma_1\sqrt{1-\rho^2}}\lambda(t, S_1(t),S_2(t))f_{s_2},\quad\widetilde{\mathbb{P}} \mbox{ almost surely.}
\end{gather*}
This condition is met in light of the continuous distribution of our processes.
\end{proof}

\subsection{Spread Option Greeks}
\label{subsec:Greek}
\par
Hurd and Zhou (2010) \cite{Hurd} pricing formula for Spread BS model. Let $\mathbf{x}=\big(\log(s_1),\log(s_2)\big)$ be the log initial asset prices, 
    \begin{align}
    \label{spreadprice}
    V^{(BS)}\big(t,s_1,s_2\big)&=\frac{ke^{-r(T-t)}}{(2\pi)^{2}}\int\int_{\mathbb{R}^2+i\epsilon}e^{i\mathbf{u}'\mathbf{x}}\hat{P}(\mathbf{u})\Phi_{\mathbf{x}}(\mathbf{u},\mathbf{\tau})d\mathbf{u},
    \end{align}
    where
    \begin{align*}
    \hat{P}(\mathbf{u})&=\frac{\Gamma\big(i(u_1+u_2)-1\big)\Gamma\big(-iu_2\big)}{\Gamma\big(iu_1+1\big)},\quad
    \Phi_{\mathbf{x}}(\mathbf{u},\tau)=\exp\big\{i\mathbf{u}'\big(r\mathbf{1}-\frac{tr(\Sigma)}{2}\big)\tau-\frac{\mathbf{u}'\Sigma\mathbf{u}\tau}{2}\big\}.
    \end{align*}
The $\epsilon=(\epsilon_1,\epsilon_2)$ term is a dampening factor with the restrictions $\epsilon_2>0$ and $\epsilon_1+\epsilon_2<-1$. To obtain a particular Greek, one can just differentiate equation \eqref{spreadprice} with respect to a desired parameter. We provide a summary of the Greeks in Table \ref{tab:spreadgreek}.

    \begin{table}[H]
    \caption{Spread Option Greeks}
    \centering
    \begin{tabular}{|c|} 
    \hline
    \textbf{First Order Greek} \\
    \hline
    $\begin{aligned}
        &\mathbf{\Delta}(t)=\begin{bmatrix} 
        \Delta_1\\
        \Delta_2
        \end{bmatrix}(t)=\frac{ke^{-r(T-t)}}{(2\pi)^{2}}\mathcal{T}^{(2)}\widebar{\mathbf{\Delta}}(t)=\frac{ke^{-r(T-t)}}{(2\pi)^{2}}
    \begin{bmatrix} 
    \frac{1}{s_1}\bar{\mathbf{\Delta}}_1\\
    \frac{1}{s_2}\bar{\mathbf{\Delta}}_2
    \end{bmatrix}(t) \\
    \end{aligned}$ \\
    \hline
    \textbf{Second Order Greek} \\
    \hline
    $\begin{aligned}
        \mathbf{\Gamma}(t)&=\begin{bmatrix} 
        \Gamma_{11}&\Gamma_{12}\\
        \Gamma_{21}&\Gamma_{22}
        \end{bmatrix}(t)=-\frac{ke^{-r(T-t)}}{(2\pi)^{2}}\Big(\mathcal{T}^{(3)}\widebar{\mathbf{\Delta}}(t)+\mathcal{T}^{(2)}\widebar{\mathbf{\Gamma}}(t)\mathcal{T}^{(2)}\Big)\\
        &=-\frac{ke^{-r(T-t)}}{(2\pi)^{2}}\begin{bmatrix} 
        \frac{1}{s_1^2}(\bar{\Delta}_{1}+\bar{\Gamma}_{11})&\frac{1}{s_1s_2}\bar{\Gamma}_{12}\\
        \frac{1}{s_1s_2}\bar{\Gamma}_{21}&\frac{1}{s_2^2}(\bar{\Delta}_{2}+\bar{\Gamma}_{22})
        \end{bmatrix}(t)  \\
    \end{aligned}$\\
    \hline
    \textbf{Third Order Greek} \\
    \hline
    $\begin{aligned}
        \mathbf{Spd}(t)&=\begin{bmatrix} 
        Spd_{111}&Spd_{112}\\
        Spd_{121}&Spd_{211}
        \end{bmatrix}
        \otimes\begin{bmatrix} 
        Spd_{122}&Spd_{221}\\
        Spd_{212}&Spd_{222}
        \end{bmatrix}(t)\\
        &=\frac{ke^{-r(T-t)}}{(2\pi)^2}\Big(2\mathcal{T}^{(4)}\widebar{\mathbf{\Delta}}(t)-\mathcal{T}^{(3)}\big(\mathcal{T}^{(2)}\widebar{\mathbf{\Gamma}}(t)\big)-\big(\mathcal{T}^{(2)}\big)^2\big(\widebar{\mathbf{Spd}}(t)\mathcal{T}^{(2)}\big)\Big)\\
        &=\frac{ke^{-r(T-t)}}{(2\pi)^{2}}\Big\{
        \begin{bmatrix} 
        \frac{1}{s_1^3}(2\bar{\Delta}_1+\bar{\Gamma}_{11}-\widebar{Spd_{111}})&-\frac{1}{s_1^2s_2}\widebar{Spd_{121}}\\
        \frac{1}{s_1^2s_2}(\bar{\Gamma}_{21}-\widebar{Spd_{211}})&-\frac{1}{s_1s_2^2}\widebar{Spd_{221}}
        \end{bmatrix},
        \\
        &
        \begin{bmatrix} 
        -\frac{1}{s_1^2s_2}\widebar{Spd_{112}}&\frac{1}{s_1^2s_2}(\bar{\Gamma}_{12}-\widebar{Spd}_{122})\\
        -\frac{1}{s_1s_2^2}\widebar{Spd_{212}}&\frac{1}{s_2^3}(2\bar{\Delta}_2+\bar{\Gamma}_{22}-\widebar{Spd}_{222})
        \end{bmatrix}
        \Big\}(t)
    \end{aligned}$\\
    \hline
    \end{tabular}
    \label{tab:spreadgreek}
    \end{table}

To derive Spread Option Delta, recall $\mathbf{x}=log{(\mathbf{s})}$ and
\begin{align*}
\mathbf{\Delta}(t)&=\frac{\partial{}V^{(BS)}(t,\mathbf{s})}{\partial{\mathbf{x}}}=\frac{\partial{}\mathbf{x}}{\partial{\mathbf{s}}}\frac{\partial{}V^{(BS)}(t,\mathbf{s})}{\partial{\mathbf{x}}}\nonumber,
\end{align*}
if we let
\begin{align*}
\mathcal{T}^{(2)}=\frac{\partial{}\mathbf{x}}{\partial{\mathbf{s}}}=
\begin{bmatrix} 
\frac{1}{s_1}&0\\
0&\frac{1}{s_2}
\end{bmatrix},
\end{align*}
then by taking the matrix derivative of \eqref{spreadprice}, we have
\begin{align*}
\mathbf{\Delta}(t)&=\frac{ke^{-r\tau}}{(2\pi)^{2}}\mathcal{T}^{(2)}\frac{\partial}{\partial{\mathbf{x}}}\Big(\int\int_{\mathbb{R}^2+i\epsilon}e^{i\mathbf{u'x}}\Phi(\mathbf{u},\tau)\hat{P}(\mathbf{u})d\textbf{u}\Big)\nonumber.
\end{align*}
By \textit{Dominated Convergence Theorem}, the order of differentiation and integration can be switched, and it follows
\begin{align}
\mathbf{\Delta}(t)&=\frac{ke^{-r\tau}}{(2\pi)^{2}}\mathcal{T}^{(2)}\int\int_{\mathbb{R}^2+i\epsilon}\frac{\partial}{\partial{\mathbf{x}}}\Big(e^{i\mathbf{u'X}(t)}\Big)\Phi(\textbf{u},\tau)\hat{P}(\mathbf{u})d\textbf{u}\nonumber
\\
&=\frac{ke^{-r\tau}}{(2\pi)^{2}}\mathcal{T}^{(2)}\int\int_{\mathbb{R}^2+i\epsilon}i\mathbf{u}e^{i\mathbf{u'x}}\Phi(\textbf{u},\tau)\hat{P}(\mathbf{u})d\textbf{u}\nonumber,
\end{align}
if we let
\begin{align}
\widebar{\mathbf{\Delta}}(t)&=\int\int_{\mathbb{R}^2+i\epsilon}i\mathbf{u}e^{i\mathbf{u'x}}\Phi(\textbf{u},\tau)\hat{P}(\mathbf{u})d\textbf{u},\nonumber
\end{align}
then we have
\begin{align*}
\mathbf{\Delta}(t)&=\frac{ke^{-r\tau}}{(2\pi)^{2}}\mathcal{T}^{(2)}\widebar{\mathbf{\Delta}}(t)=\frac{ke^{-r\tau}}{(2\pi)^{2}}
\begin{bmatrix} 
\frac{1}{s_1}\bar{\mathbf{\Delta}}_1\\
\frac{1}{s_2}\bar{\mathbf{\Delta}}_2
\end{bmatrix}(t)\nonumber.
\end{align*}

\par
The Spread Option Gamma can be defined as:
\begin{align*}
\mathbf{\Gamma}(t)&=\frac{\partial{}\mathbf{\Delta}(t)}{\partial{\mathbf{s}}}=\frac{ke^{-r(T-t)}}{(2\pi)^{2}}\Big(\frac{\partial{}\mathcal{T}^{(2)}}{\partial{\mathbf{s}}}\widebar{\mathbf{\Delta}}(t)+\mathcal{T}^{(2)}\frac{\partial\widebar{\mathbf{\Delta}}(t)}{\partial{\mathbf{x}}}\frac{\partial{}\mathbf{x}}{\partial{\mathbf{s}}}\Big),\nonumber
\end{align*}
let
\begin{align*}
\widebar{\mathbf{\Gamma}}(t)&=\int\int_{\mathbb{R}^2+i\epsilon}(\mathbf{u}\otimes\mathbf{u})e^{i\mathbf{u'x}}\Phi(\textbf{u},\tau)\hat{P}(\mathbf{u})d\textbf{u},\quad\text{and}
\\
\mathcal{T}^{(3)}&=\Big\{
\begin{bmatrix} 
\frac{1}{s_1^2}&0\\
0&0
\end{bmatrix}
,
\begin{bmatrix} 
0&0\\
0&\frac{1}{s_2^2}
\end{bmatrix}\Big\},
\end{align*}
then we have
\begin{align*}
\mathbf{\Gamma}(t)&=\frac{ke^{-r(T-t)}}{(2\pi)^{2}}\Big\{-\mathcal{T}^{(3)}\int\int_{\mathbb{R}^2+i\epsilon}i\mathbf{u}e^{i\mathbf{u'x}}\Phi(\textbf{u},\tau)\hat{P}(\mathbf{u})d\textbf{u}\nonumber
\\
&-\mathcal{T}^{(2)}\Big(\int\int_{\mathbb{R}^2+i\epsilon}(\mathbf{u}\otimes\mathbf{u})e^{i\mathbf{u'x}}\Phi(\textbf{u},\tau)\hat{P}(\mathbf{u})d\textbf{u}\Big)\mathcal{T}^{(2)}\Big\}\nonumber
\\
&=-\frac{ke^{-r(T-t)}}{(2\pi)^{2}}\Big(\mathcal{T}^{(3)}\widebar{\mathbf{\Delta}}(t)+\mathcal{T}^{(2)}\widebar{\mathbf{\Gamma}}(t)\mathcal{T}^{(2)}\Big)
\\
&=-\frac{ke^{-r(T-t)}}{(2\pi)^{2}}\begin{bmatrix} 
\frac{1}{s_1^2}(\bar{\Delta}_{1}+\bar{\Gamma}_{11})&\frac{1}{s_1s_2}\bar{\Gamma}_{12}\\
\frac{1}{s_1s_2}\bar{\Gamma}_{21}&\frac{1}{s_2^2}(\bar{\Delta}_{2}+\bar{\Gamma}_{22})
\end{bmatrix}(t)
\end{align*}

\par
The Spread Option Spd can be defined as:
\begin{align*}
\mathbf{Spd}(t)&=\frac{\partial{}\mathbf{\Gamma}(t)}{\partial{\mathbf{s}}}=-\frac{ke^{-r(T-t)}}{(2\pi)^{2}}\Big\{\frac{\partial{}\mathcal{T}^{(3)}}{\partial{\mathbf{s}}}\widebar{\mathbf{\Delta}}(t)+\mathcal{T}^{(3)}\frac{\partial\widebar{\mathbf{\Delta}}(t)}{\partial{\mathbf{x}}}\frac{\partial{}\mathbf{x}}{\partial{\mathbf{s}}}\nonumber
\\
&+\frac{\partial{}\mathcal{T}^{(2)}}{\partial{\mathbf{s}}}\widebar{\mathbf{\Gamma}}(t)\mathcal{T}^{(2)}+\mathcal{T}^{(2)}\Big(\frac{\partial\widebar{\mathbf{\Gamma}}(t)}{\partial{\mathbf{x}}}\frac{\partial{}\mathbf{x}}{\partial{\mathbf{s}}}\mathcal{T}^{(2)}+\widebar{\mathbf{\Gamma}}(t)\frac{\partial{}\mathcal{T}^{(2)}}{\partial{\mathbf{s}}}\Big)\Big\}\nonumber,
\end{align*}
let
\begin{align*}
\widebar{\mathbf{Spd}}(t)&=\int\int_{\mathbb{R}^2+i\epsilon}i(\mathbf{u}\otimes\mathbf{u}\otimes\mathbf{u})e^{i\mathbf{u'X}(t)}\Phi(\textbf{u},\tau)\hat{P}(\mathbf{u})d\textbf{u},\quad\text{and}
\\
\mathcal{T}^{(4)}&=\Big\{\Big(
\begin{bmatrix} 
\frac{1}{s_1^3}&0\\
0&0
\end{bmatrix}
,
\begin{bmatrix} 
0&0\\
0&0
\end{bmatrix}
\Big)
,
\Big(
\begin{bmatrix} 
0&0\\
0&0
\end{bmatrix}
,
\begin{bmatrix} 
0&0\\
0&\frac{1}{s_2^3}
\end{bmatrix}
\Big)\Big\},
\nonumber
\end{align*}
then we have,
\begin{align*}
\mathbf{Spd}(t)&=-\frac{ke^{-r\tau}}{(2\pi)^{2}}\Big\{-2\mathcal{T}^{(4)}\widebar{\mathbf{\Delta}}(t)+\mathcal{T}^{(3)}\widebar{\mathbf{\Gamma}}(t)\mathcal{T}^{(2)}-\mathcal{T}^{(3)}\widebar{\mathbf{\Gamma}}(t)\mathcal{T}^{(2)}\nonumber
\\
&+\mathcal{T}^{(2)}\Big(\big(\int\int_{\mathbb{R}^2+i\epsilon}i(\mathbf{u}\otimes\mathbf{u}\otimes\mathbf{u})e^{i\mathbf{u'X}(t)}\Phi(\textbf{u},\tau)\hat{P}(\mathbf{u})d\textbf{u}\big)\big(\mathcal{T}^{(2)}\big)^2-\widebar{\mathbf{\Gamma}}(t)\mathcal{T}^{(3)}\Big)\Big\}\nonumber
\\
&=\frac{ke^{-r\tau}}{(2\pi)^{2}}\Big(2\mathcal{T}^{(4)}\widebar{\mathbf{\Delta}}(t)+\mathcal{T}^{(2)}\widebar{\mathbf{\Gamma}}(t)\mathcal{T}^{(3)}-\mathcal{T}^{(2)}\widebar{\mathbf{Spd}}(t)\big(\mathcal{T}^{(2)}\big)^2\Big)
\\
&=\frac{ke^{-r\tau}}{(2\pi)^{2}}\Big\{
\begin{bmatrix} 
\frac{1}{s_1^3}(2\bar{\Delta}_1+\bar{\Gamma}_{11}-\widebar{Spd_{111}})&-\frac{1}{s_1^2s_2}\widebar{Spd_{121}}\\
\frac{1}{s_1^2s_2}(\bar{\Gamma}_{21}-\widebar{Spd_{211}})&-\frac{1}{s_1s_2^2}\widebar{Spd_{221}}
\end{bmatrix},
\\
&
\begin{bmatrix} 
-\frac{1}{s_1^2s_2}\widebar{Spd_{112}}&\frac{1}{s_1^2s_2}(\bar{\Gamma}_{12}-\widebar{Spd}_{122})\\
-\frac{1}{s_1s_2^2}\widebar{Spd_{212}}&\frac{1}{s_2^3}(2\bar{\Delta}_2+\bar{\Gamma}_{22}-\widebar{Spd}_{222})
\end{bmatrix}
\Big\}(t).
\end{align*}

\par
The method required to determine higher order Greeks becomes redundant. All the Greeks will be linear combinations of contour integral with the particular form:
\begin{align}
\label{contourgreek}
\widebar{\mathbf{Greek}}(t,s_1,s_2)=\int\int_{\mathbb{R}^2+i\epsilon}f_{\otimes}(\mathbf{u})e^{i\mathbf{u'x}}\Phi(\mathbf{u},\tau)\hat{P}(\mathbf{u})d\textbf{u},
\end{align}
where $f_{\otimes}(\mathbf{u})$ is some complex tensor polynomial function. For example, $\mathbf{\Gamma}(t)$ is a linear combination of the contour integrals $\bar{\mathbf{\Delta}}(t)$ and $\bar{\mathbf{\Gamma}}(t)$, with respective tensor polynomial functions $i\mathbf{u}$ and $\mathbf{u}\otimes\mathbf{u}$.

\subsection{Finite Liquidity Existence Theorem III}
\label{subsec:FiniteExist3}
\begin{proof}
\par
According to Proposition (\ref{thm:SolGirsanov}), the system of SDEs in \eqref{FLMMPDEPart} emit weak solutions when the diffusion functions 
\\
$\sigma_{11}^{*}(t,s_1,s_2)$ and $\sigma_{11}^{*}(t,s_1,s_2)$ are uniformly Lipshitz continuous. We can invoke Theorem 2.1 of Pirvu and Zhang (2020) \cite{zhangshuai}, and check whether the regularity conditions (1)-(3) are satisfied. We reinstate the conditions:
    \begin{align*}
    \begin{aligned}
    &(1)\qquad\|\lambda(s_1f_{s_1s_1}+s_1f_{s_1s_2}+f_{s_2}+s_2f_{s_2}+s_2f_{s_1s_1}+s_2f_{s_1s_2}+s_2f_{s_2s_2})\|<\infty, \cr
    &(2)\qquad\|\big(\lambda_{s_1}+\lambda_{s_2}\big)\big(s_1f_{s_1}+s_2f_{s_1}+s_2f_{s_2}\big)\|<\infty, \cr
    &(3)\qquad|||1-\lambda{}f_{s_1}|||>\delta_0, \text{ for some }\delta_0>0. \cr
    \end{aligned}
    \end{align*}
\par
To achieve this, first recall from \eqref{contourgreek} that all the Greeks are just linear combinations of the form:
\begin{align*}
\widebar{\mathbf{Greek}}(t,s_1,s_2)&=\int\int_{\mathbb{R}^2+i\epsilon}f_{\otimes}(\mathbf{u})e^{i\mathbf{u'x}}\Phi(\mathbf{u},\tau)\hat{P}(\mathbf{u})d\textbf{u}
\\
&=e^{-\epsilon'\mathbf{x}}\int\int_{\mathbb{R}^2}f_{\otimes}(\mathbf{u}+i\epsilon)e^{i\Re(\mathbf{u})'\mathbf{x}}\Phi(\mathbf{u}+i\epsilon,\tau)\hat{P}(\mathbf{u}+i\epsilon)d\textbf{u}
\\
&=\frac{1}{s_1^{\epsilon_1}s_2^{\epsilon_2}}\int\int_{\mathbb{R}^2}f_{\otimes}(\mathbf{u}+i\epsilon)e^{i\Re(\mathbf{u})'\mathbf{x}}\Phi(\mathbf{u}+i\epsilon,\tau)\hat{P}(\mathbf{u}+i\epsilon)d\textbf{u}
\\
&=\frac{1}{s_1^{\epsilon_1}s_2^{\epsilon_2}}\widebar{\mathbf{Greek}}^\Re(t,s_1,s_2).
\end{align*}
Here we use $\widebar{\mathbf{Greek}}^\Re(t,s_1,s_2)$ to distinguish between contour and real integrals forms. The term $e^{i\Re(\mathbf{u})'\mathbf{x}}$ lays on the complex unit circle, this results in $\|\widebar{\mathbf{Greek}}^\Re(t,s_1,s_2)\|<\infty$ for all Greeks. Then proving the regularity conditions only boils down to the terms $\frac{1}{s_1^{\epsilon_1}s_2^{\epsilon_2}}$.
\par
When we rewrite the counter integral as real integrals and substitute the BS Spread Greeks into Condition (1), we get:
    \begin{align*}
    &\lambda(t,s_1,s_2)\big(\frac{ke^{-r\tau}}{(2\pi)^{2}}\big)\big(\frac{s_1+s_2}{s_1^{3+\epsilon_1}s_2^{\epsilon_2}}(2\bar{\Delta}^\Re_1+\bar{\Gamma}^\Re_{11}-\widebar{Spd}^\Re_{111})-\frac{s_1+s_2}{s_1^{2+\epsilon_1}s_2^{1+\epsilon_2}}\widebar{Spd}^\Re_{112}
    \\
    &-\frac{1+s_2}{s_1^{1+\epsilon_1}s_2^{1+\epsilon_2}}\bar{\Gamma}^\Re_{12}-\frac{1+s_2}{s_1^{1+\epsilon_1}s_2^{1+\epsilon_2}}\widebar{Spd}^\Re_{122}\big).
    \end{align*}
By dropping the constants and bounded real integral terms, we have
    \begin{align}
    \label{order}
        \begin{aligned}
        &\lambda(t,s_1,s_2)\big(\frac{s_1+s_2}{s_1^{3+\epsilon_1}s_2^{\epsilon_2}}-\frac{s_1+s_2}{s_1^{2+\epsilon_1}s_2^{1+\epsilon_2}}-\frac{1+s_2}{s_1^{1+\epsilon_1}s_2^{1+\epsilon_2}}-\frac{1+s_2}{s_1^{1+\epsilon_1}s_2^{1+\epsilon_2}}\big)
        \\
        &=\lambda(t,s_1,s_2)\Big(\frac{s_2^2-3s_1^2-2s_1^2s_2}{s_1^{3+\epsilon_1}s_2^{1+\epsilon_2}}\Big).    
        \end{aligned}
    \end{align}
Since $\lambda(t,s_1,s_2)$ is only non-zero between $\underline{S}$ and $\overline{S}$, we conclude Expression \eqref{order} is bounded.
\par
Substitute the BS Spread Greeks into Condition (2) and adopting real integrals, we get:
    \begin{align*}
    &\big(\lambda_{s_1}+\lambda_{s_2}\big)\big(\frac{ke^{-r\tau}}{(2\pi)^{2}}\big)\big(\frac{s_1+s_2}{s_1^{2+\epsilon_1}s_2^{\epsilon_2}}(\bar{\Delta}^\Re_1+\bar{\Gamma}^\Re_{11})+\frac{s_2}{s_1^{1+\epsilon_1}s_2^{1+\epsilon_2}}\bar{\Gamma}^\Re_{12}\big).
    \end{align*}
By dropping the constants and bounded real integral terms, we have  
    \begin{align*}
    &\big(\lambda_{s_1}+\lambda_{s_2}\big)\big(\frac{2s_1s_2+s_2^2}{s_1^{2+\epsilon_1}s_2^{1+\epsilon_2}}\big).
    \end{align*}    
By the same argument of Condition (1), this term is also bounded.
\par
Condition (3) is bounded in $s_1$ and $s_2$ by the same logic as Condition (1) and (2). For the $t$ dimension of Condition $(3)$, $\Gamma_{11}(t,s_1,s_2)$ diverges as $t\to{T}$. By design $\lambda(t,s_1,s_2)$ has a higher order decaying that ensures $\lambda(t,s_1,s_2)\Gamma_{11}(t,s_1,s_2)$ to stay finite. Therefore we can always find a $\delta_0>0$ for which Condtion (3) holds.
\par
Since we have showed Condition $(1)$ to $(3)$ hold under $\widetilde{\mathbbm{P}}$ measure, we can conclude the system of SDE \eqref{FLMMPDEPart} emit strong solution under $\widetilde{\mathbbm{P}}$.
\end{proof}

\subsection{Finite Liquidity Existence and Uniqueness IV}
\label{subsec:FiniteExist4}
\par
\begin{proof}
By similar argument as \ref{theorem:FLMMexistIII},  the system of SDEs in \eqref{FLMMPDEFull} emits strong $\widetilde{\mathbbm{P}}$ solution when the diffusion functions $\sigma_{11}^{**}(t,s_1,s_2)$ and $\sigma_{11}^{**}(t,s_1,s_2)$ are uniformly Lipshitz continuous. Since these diffusion functions contain partial derivatives of $V(t,s_1,s_2)$ (the option with full impact), we need to establish existence of solution for the PDE \eqref{FLMMPDEFull}.
\par
Define $\Omega=\lbrace (t,x,y) \vert (t,x,y) \in [0, T ] \times (0,\infty) \times (0,\infty) \rbrace,$ and let $\mathcal{X}$ and $\mathcal{Y}$ be:
\begin{align*}
\mathcal{X}&=\big\{V \in C^{1,4,4}(\Omega) ~| ~s.t~\lambda V_{s_1s_1},~\lambda V_{s_1s_2},~\lambda V_{s_1s_1s_1}~and~\lambda V_{s_1s_1s_2}~are~bounded ~on~ \Omega,
\\&~and~conditions (1),(2),(3)~are~met\},
\\
\mathcal{Y}&=\Im \big(F(V(\varepsilon),\varepsilon)\big).
\end{align*}
 
Furthermore, take
    \begin{align*}
    F(V(\varepsilon),\varepsilon)&=V_t+\frac{V_{s_1s_1}}{2(1-\lambda{}V_{s_1})^2}\big(\sigma_1^2s_1^2+\lambda^2V_{s_2}^2\sigma^2_2s_2^2+2\lambda{}V_{s_2}\rho\sigma_1\sigma_2s_1s_2\big)
    \\
    &+\frac{V_{s_1s_2}}{1-\lambda{}V_{s_1}}\big(\rho\sigma_1\sigma_2s_1s_2+\lambda{}V_{s_2}\sigma_2^2s_2^2\big)+\frac{1}{2}V_{s_2s_2}\sigma_2^2s_2^2+rs_1V_{s_1}+rs_2V_{s_2}-rV.\nonumber
    \end{align*}
When we set $\varepsilon_0 =0$ and ${V_0} =V^{(BS)}$ (i.e solution of the BS PDE without price impact). According to \textit{Implicit Function Theorem}, given that the conditions
    \\
    \begin{enumerate}[\quad i.]
    \item $F({V_0}, \varepsilon_0) = 0$,
    \item the linear mapping $F_V({V_0}, \varepsilon_0) : \mathcal{X} \rightarrow \mathcal{Y},$ (the Gateaux derivative of $F$) is bijective,\\
    \end{enumerate}
are met, then there exists a neighborhood $V$ of ${V_0}$ and a neighborhood $\varepsilon$ of $\varepsilon_0$ such that for every $\varepsilon$ in that neighbourhood, there is a unique element $V(\varepsilon)$ such that
$F(V(\varepsilon),\varepsilon) = 0$. Moreover the mapping $\Lambda \ni \varepsilon \rightarrow V(\varepsilon)$ is of class $C^1$.
Condition (i) is trivially satisfied since $F({V_0},\varepsilon_0)=0$ (this is in fact the BS PDE without price impact). Next, we can show that $V_{0} \in X$ (the proof, based on standard arguments, is omitted). Now we are going to argue that the linear mapping $F_V({V_0},\varepsilon_0) : \mathcal{X} \rightarrow \mathcal{Y}$ is bijective. According to definition, the Gateaux derivative of $F$ at $V_0$ in the direction $V$ is
    \begin{align*}
    F_V({V_0},\varepsilon _0)V&=\lim_{\tau \rightarrow 0}\dfrac{F({V_0}+\tau V,\varepsilon_0)-F({V_0},\varepsilon_0)}{\tau}
    \\
    &=V_t+\frac{1}{2}\sigma_{1}^{2}s_{1}^{2}V_{s_{1}s_{1}}+\frac{1}{2}\sigma_{2}^{2}s_{2}^{2}V_{s_{2}s_{2}}+\sigma_{1}\sigma_{2}s_{1}s_{2}\rho V_{s_{1}s_{2}}
    \\
    &+rs_{1}V_{s_1}+rs_{2}V_{s_2}-rV.
    \end{align*}
Thus, the operator $\mathcal{L}=F_V({V_0},\varepsilon _0)$ is
\begin{align*}
\mathcal{L}=\frac{\partial }{\partial t}+\frac{1}{2}\sigma_{1}^{2}s_{1}^{2}\frac{\partial ^2}{\partial s_1^2}+ \frac{1}{2}\sigma_{2}^{2}s_{2}^{2}\frac{\partial ^2}{\partial s_2^2}+\sigma_{1}\sigma_{2}s_{1}s_{2}\rho \frac{\partial ^2}{\partial s_1\partial s_2}+rs_{1}\frac{\partial }{\partial s_{1}}+rs_{2}\frac{\partial }{\partial s_{2}}-r.
\end{align*}
Condition (ii) boils down to showing that the equation $\mathcal{L}V=g$ has a unique solution $V\in \mathcal{X}$ for every $g\in \mathcal{Y}.$ The proof of this, based on standard arguments, is omitted. The proof up to this point ensures the PDE \ref{FLMMPDEFull} has a solution in $C^{1,4,4}(\Omega)$. It remains to show the Lipshitz requirements of $\sigma_{11}^{**}(t,s_1,s_2)$ and $\sigma_{11}^{**}(t,s_1,s_2)$, which boils down to
\par
\begin{align*}
    (1)\qquad&\|\lambda(s_1V_{s_1s_1s_1}+s_1V_{s_1s_1s_2}+V_{s_1s_2}+s_2V_{s_1s_2}+s_2V_{s_1s_1s_1}+s_2V_{s_1s_1s_2}
    \\
    &+s_2V_{s_1s_2s_2})\|<\infty,
    \\
    (2)\qquad&\|\big(\lambda_{s_1}+\lambda_{s_2}\big)\big(s_1V_{s_1s_1}+s_2V_{s_1s_1}+s_2V_{s_1s_2}\big)\|<\infty,
    \\
    (3)\qquad&|||1-\lambda{}V_{s_1s_1}|||>\delta_0, \text{ for some }\delta_0>0.
    \end{align*}
Since $V(t,s_1,s_2)$ is $C^{1,4,4}(\Omega)$, conditions (1) and (2) are satisfied by the piece-wise property of $\lambda(t,s_1,s_2)$. Condition (3) is also satisfied because $\lambda$ has an order of $\mathcal{O}(\tau^{\frac{3}{2}})$, which has a decaying effect on  $|||1-\lambda{}V_{s_1s_1}|||$ as $t\to{T}$.
\end{proof}

\subsection{Spread Option DGM Loss Functions}
\label{app:SpreadLoss}
These functions are the MSE estimators for \eqref{eq:loss1}, \eqref{eq:loss2} and \eqref{eq:loss3}.
\begin{align*}
\hat{J}^{(b)}_1(\theta)&=\frac{1}{N}\sum^N_{(t,s_1,s_2)\sim\hat{\phi}_1}\big(\mathcal{L}^{(b)}f(t,s_1,s_2;\mathbf{\theta})\big)^2,
\\
\mathcal{L}^{(b)}&=\frac{\partial }{\partial t}+\frac{1}{2}\sigma_{1}^{2}s_{1}^{2}\frac{\partial ^2}{\partial s_1^2}+\sigma_{1}\sigma_{2}s_{1}s_{2}\rho \frac{\partial ^2}{\partial s_1\partial s_2}+\frac{1}{2}\sigma_{2}^{2}s_{2}^{2}\frac{\partial ^2}{\partial s_2^2}+rs_{1}\frac{\partial }{\partial s_{1}}
+rs_{2}\frac{\partial }{\partial s_{2}}-r,
\\
\hat{J}^{(p)}_1(\theta)&=\frac{1}{N}\sum^N_{(t,s_1,s_2)\sim\hat{\phi}_1}\big(\mathcal{L}^{(p)}f(t,s_1,s_2;\mathbf{\theta})\big)^2,
\\
\mathcal{L}^{(p)}&=\frac{\partial}{\partial t}+\frac{\sigma_1^2s_1^2+\sigma^2_2s_2^2\lambda^2(V^{(BS)}_{s_1s_2})^2+2\rho\sigma_1\sigma_2s_1s_2\lambda{}V^{(BS)}_{s_1s_2}}{2(1-\lambda{}V^{(BS)}_{s_1s_1})^2}\frac{\partial ^2}{\partial s_1^2}
\\
&+\frac{\rho\sigma_1\sigma_2s_1s_2+\sigma_2^2s_2^2\lambda V^{(BS)}_{s_1s_2}}{1-\lambda{}V^{(BS)}_{s_1s_1}}\frac{\partial ^2}{\partial s_1\partial s_2}+\frac{1}{2}\sigma_2^2s_2^2\frac{\partial ^2}{\partial s_2^2}+rs_1\frac{\partial}{\partial s_1}+rs_2\frac{\partial}{\partial s_2}-r,
\\
\hat{J}^{(f)}_1(\theta)&=\frac{1}{N}\sum^N_{(t,s_1,s_2)\sim\hat{\phi}_1}\big(\mathcal{L}^{(f)}f(t,s_1,s_2;\mathbf{\theta})\big)^2,
\\
\mathcal{L}^{(f)}&=\frac{\partial}{\partial t}+\frac{\big(\sigma_1^2s_1^2+\sigma^2_2s_2^2\lambda^2(\frac{\partial ^2}{\partial s_1\partial s_2})^2+2\rho\sigma_1\sigma_2s_1s_2\lambda{}\frac{\partial ^2}{\partial s_1\partial s_2}\big)}{2(1-\lambda{}\frac{\partial ^2}{\partial s_1\partial s_2})^2}\frac{\partial ^2}{\partial s_1^2}
\\
&+\frac{\rho\sigma_1\sigma_2s_1s_2+\sigma_2^2s_2^2\lambda \frac{\partial ^2}{\partial s_1\partial s_2}}{1-\lambda{}\frac{\partial ^2}{\partial s_1\partial s_2}}\frac{\partial ^2}{\partial s_1\partial s_2}+\frac{1}{2}\sigma_2^2s_2^2\frac{\partial ^2}{\partial s_2^2}+rs_1\frac{\partial}{\partial s_1}+rs_2\frac{\partial}{\partial s_2}-r,
\\
\hat{J}_2(\theta)&=\frac{1}{N}\sum_{(s_1,s_2)\sim\hat{\phi}_2}^N\big((s_1-s_2-k)^+-f(T,s_1,s_2;\mathbf{\theta})\big)^2,
\\
\hat{J}_3(\theta)&=\frac{1}{N}\sum_{(t,s_2)\sim\hat{\phi}_{32}}^N\big(C-s_2-ke^{-r(T-t)}-f(t,C,s_2;\mathbf{\theta})\big)^2
\\
&+\frac{1}{N}\sum_{(t,s_1)\sim\hat{\phi}_{31}}^N\big(s_1\mathcal{N}(d_+)-ke^{-r(T-t)}\mathcal{N}(d_-)-f(t,s_1,0;\mathbf{\theta})\big)^2
\\
&+\frac{1}{N}\sum_{(t,s_1)\sim\hat{\phi}_{31}}^Nf^2(t,s_1,C;\mathbf{\theta})+\frac{1}{N}\sum_{(t,s_2)\sim\hat{\phi}_{32}}^Nf^2(t,0,s_2;\mathbf{\theta}).
\end{align*}

\section*{Acknowledgments}
The authors are grateful to the anonymous referee for a careful checking of the details and for helpful comments that improved this paper.

\bibliographystyle{siamplain}
\bibliography{main}
\end{document}